\documentclass[10pt]{article}
\usepackage[top=0.9in,bottom=1.1in,left=1.05in,right=1.05in]{geometry}
\usepackage{amsmath,amssymb}
\usepackage{amsthm}
\usepackage{threeparttable}
\usepackage{latexsym}
\usepackage{mathrsfs,dsfont}
\usepackage{cancel}
\usepackage{comment}
\usepackage{float}
\usepackage{graphicx}
\usepackage{epstopdf}
\usepackage{color}
\usepackage{enumerate}
\usepackage{mathtools}
\mathtoolsset{showonlyrefs}
\DeclareGraphicsExtensions{.eps}


\numberwithin{equation}{section}

\newcommand{\MCR}{\mathcal{R}}

\newcommand{\MCA}{\mathcal{A}}
\newcommand{\MCM}{\mathcal{M}}
\newcommand{\MCF}{\mathcal{F}}
\newcommand{\MCO}{\mathcal{O}}
\newcommand{\MCD}{\mathcal{D}}

\newcommand{\MCL}{\mathcal{L}}

\newcommand{\EE}{\mathbb{E}}

\newcommand{\RR}{\mathbb{R}}
\newcommand{\NN}{\mathbb{N}}

\newcommand{\Ltx}{\mathcal{L}_{t,x}}

\newcommand{\Vf}{V^{\eps,\delta}}

\newcommand{\vz}{v^{(0)}}

\newcommand{\pz}{{\pi^{(0)}}}

\newcommand{\vzl}{v^{\pz,(0)}}

\newcommand{\vtl}{v^{\pz,(2)}}

\newcommand{\Vfl}{V^{\pz,\eps,\delta}}
\newcommand{\Vft}{\widetilde{V}^{\eps,\delta}}
\newcommand{\pozt}{\widetilde \pi^{(1,0)}}
\newcommand{\pzot}{\widetilde{\pi}^{(0,1)}}
\newcommand{\Vflz}{V^{\pz,\eps,0}}
\newcommand{\Vflo}{V^{\pz,\eps,1}}
\newcommand{\vzol}{v^{\pz,(0,1)}}
\newcommand{\vozl}{v^{\pz,(1,0)}}
\newcommand{\vthzl}{v^{\pz,(3,0)}}

\newcommand{\vtol}{v^{\pz, (2,1)}}
\newcommand{\vool}{v^{\pz, (1,1)}}
\newcommand{\vtzl}{v^{\pz, (2,0)}}
\newcommand{\vzo}{v^{(0,1)}}
\newcommand{\voz}{v^{(1,0)}}
\newcommand{\vztw}{v^{(0,2)}}
\newcommand{\voo}{v^{(1,1)}}
\newcommand{\vtz}{v^{(2,0)}}
\newcommand{\vzot}{\widetilde  v^{(0,1)}}
\newcommand{\vozt}{\widetilde  v^{(1,0)}}

\newcommand{\pitilde}{\widetilde \pi^{\eps, \delta}}

\newcommand{\pzt}{\widetilde\pi^0}

\newcommand{\vzt}{\widetilde{v}^{(0)}}

\newcommand{\aves}{\overline\lambda}

\newcommand{\eps}{\epsilon}

\newcommand{\abs}[1]{\left|#1\right|}
\newcommand{\average}[1]{\left\langle#1\right\rangle}

\newcommand{\ud}{\,\mathrm{d}}

\newcommand{\half}{\frac{1}{2}}

\newtheorem{theo}{Theorem}[section]
\newtheorem{lem}[theo]{Lemma}

\newtheorem{prop}[theo]{Proposition}
\newtheorem{assump}[theo]{Assumption}

\begin{document}

\title{\vspace{-50pt} Multiscale Asymptotic Analysis for Portfolio Optimization under Stochastic Environment}
\author{Jean-Pierre Fouque\thanks{Department of Statistics \& Applied Probability,
 University of California,
        Santa Barbara, CA 93106-3110, {\em fouque@pstat.ucsb.edu}. Work  supported by NSF grant  DMS-1814091.}
        \and Ruimeng Hu\thanks{Department of Statistics, Columbia University, New York, NY 10027-4690, {\em rh2937@columbia.edu}.}
        }
\date{\today}
\maketitle

\begin{abstract}
	
Empirical studies indicate the presence of multi-scales in the volatility of underlying
assets: a fast-scale on the order of days and a slow-scale on the order of months. In our previous works, we have studied the portfolio optimization problem in a Markovian setting under each single scale, the slow one in  [Fouque and Hu, SIAM J. Control Optim., 55 (2017), 1990-2023], and the fast one in [Hu, Proceedings of the 2018 IEEE CDC, 5771-5776, 2018]. 
This paper is dedicated to the analysis when the two scales coexist in a Markovian setting. We study the terminal wealth utility maximization problem when the volatility is driven by both fast- and slow-scale factors. We first propose a zeroth-order strategy, and rigorously establish the first order approximation of the associated problem value. This is done by analyzing the corresponding  linear partial differential equation (PDE) via  regular and singular perturbation techniques, as in the single-scale cases. Then, we show the asymptotic optimality of our proposed strategy by comparing its performance  to admissible strategies of a specific form. Interestingly, we highlight that a pure PDE approach does not work in the multi-scale case and, instead,  we use  the so-called epsilon-martingale decomposition. This completes the analysis of portfolio optimization in both fast mean-reverting and slowly-varying Markovian stochastic environments.

\end{abstract}

\noindent\textbf{Keywords:}
	Optimal portfolio, multiscale stochastic volatility, asymptotic optimality, epsilon-martingale decomposition, regular and singular perturbations


\section{Introduction}\label{sec_intro}
A classical problem in mathematical finance is the utility maximization of consumption and/or terminal wealth for an investor. This was first studied by Mossin \cite{Mo:68} and Samuelson \cite{Sa:69} in the discrete-time framework, and by Merton \cite{Me:69, Me:71} in the continuous-time case. In Merton's seminal work, the underlying assets follow the Black-Scholes (BS) model, that is, the return and volatility are constants, and the utility is of certain type, for instance, the Constant Relative Risk Aversion utility function. Explicit solutions are provided on how to invest and/or consume. Later, this problem has been studied extensively in various settings and levels of generality, for example, to allow transaction costs \cite{MaCo:76, GuMu:13}, drawdown constraints \cite{GrZh:93, CvKa:95, ElTo:08},  to consider price impact \cite{CuCv:98}, and to extend the BS model to stochastic volatility \cite{Za:99, ChVi:05, FoSiZa:13, LoSi:16}.

This paper generalizes Merton's work in two directions. Firstly, observing time-varying risk aversion in individual asset allocation \cite{BrNa:08}, we consider \emph{general utility functions}. Secondly, in the direct of asset modeling, we use a multiscale stochastic volatility model, in line with the existence a fast-time scale in stock price volatility on the order of days as well as a slow-scale on the order of months in the financial markets \cite{FoPaSiSo:11}. 
In this context, asymptotic analysis has been developed over decades to option pricing problems, where singular and regular perturbation methods are applied to derive efficient approximations; here, we present new results for the nonlinear Merton problem with general utility functions on $\RR^+$.

 Following the modeling  in \cite{FoPaSiSo:11}, the Markovian dynamics of the asset price and stochastic factors read:
	\begin{align}
	&\ud S_t  = \mu(Y_t,Z_t)S_t \ud t + \sigma(Y_t,Z_t)S_t\ud W_t,\\
	&\ud Y_t = \frac{1}{\eps}b(Y_t)\ud t + \frac{1}{\sqrt{\eps}}a(Y_t)\ud W_t^Y,\\
	&\ud Z_t = \delta c(Z_t)\ud t + \sqrt\delta g(Z_t)\ud W_t^Z,
	\end{align}
	where the asset return $\mu$ and volatility $\sigma$ are functions of the two factors: the fast-scale factor  $Y_t$ and the slow-scale one $Z_t$. The two parameters  $(\eps,\delta) \ll 1$ are small to capture the two scales, $(W_t, W^Y_t, W_t^Z)$ are correlated Brownian motions. Detailed discussion about this modeling is presented in Section~\ref{sec_PDE_multi}. 
	
	Fixing a time horizon $T$, we are interested in the terminal utility maximization problem:
	\begin{equation}\label{def_object}
	\sup_\pi \EE[U(X_T^\pi)],
	\end{equation}
	where the general utility $U(\cdot)$ satisfies Assumption~\ref{assump_U}, and $X_t^\pi$ is the investor's wealth at time $t$, consisting of two parts: the money invested in the risky asset $S_t$, denoted by $\pi$, and the remaining part $X_t^\pi - \pi$ put into the money market earning the risk-free interest rate. Restricting $\pi$ to self-financing strategies (no exogenous deposit or withdrawal of money after time 0), and assume $r = 0$ for simplicity, $X_t^\pi$ follows:
	\begin{equation}\label{def_Xt}
	\ud X_t^\pi = \pi_t\mu(Y_t,Z_t)\ud t + \pi_t\sigma(Y_t,Z_t) \ud W_t.
	\end{equation}
	
	Focusing on the feedback strategies, that is, let $\pi_t = \pi(t, X_t^\pi, Y_t, Z_t)$, this problem can be tackled using the method of dynamic programming. The main idea is to embed our original problem \eqref{def_object} into a much larger class of problems with different starting time $t$, initial wealth $x$ and initial levels of both factors $(y,z)$, and then to connect all these problems together with a PDE known as the Hamilton-Jacobi-Bellman (HJB) equation. For that purpose, we define the value function as
	\begin{equation}\label{def_Vf}
	\Vf(t,x,y,z)  = \sup_{\pi \in \MCA^{\eps, \delta}} \EE\left[U(X_T^\pi)\vert X_t^\pi = x, Y_t = y, Z_t = z\right],
	\end{equation}
	with $\MCA^{\eps, \delta}$ collecting all feedback admissible controls:
	\begin{equation}\label{def_MCA}
	\MCA^{\eps, \delta} = \{\pi: \pi_t \equiv \pi(t, X_t^\pi, Y_t, Z_t), X_s^\pi \geq 0, \;\forall s \geq t, \text{ given } (X_t^\pi, Y_t, Z_t) = (x, y, z)\}.
	\end{equation}
	The superscripts $\eps, \delta$ emphasize the dependence on the two small parameters introduced through $Y_t$ and $Z_t$.

	In general, depending on assumptions, $\Vf$ is characterized as a classical or viscosity solution to the HJB equation \eqref{p2_eq_VfPDE}, for which closed-form solutions are rarely available. In \cite{FoSiZa:13}, assuming the existence of classical solutions, a formal first order expansion is derived via singular and regular perturbation techniques: 
	\begin{equation}\label{eq_Vfexpansion}
	\Vf = \vz + \sqrt{\eps}\voz + \sqrt{\delta}\vzo + \eps\vtz + \delta \vztw + \sqrt{\eps\delta}\voo + \cdots.
	\end{equation}
	Formulations of  $\vz$, $\voz$ and $\vzo$ will be presented in Section~\ref{p2_sec_multiheuristic}. Note that the above expansion is not rigorous even in the canonical power utility case, as the distortion transformation \cite{Za:99} which linearizes the problem is not available with more than one stochastic volatility factor. Nevertheless, they conjecture \cite[Section~4.2]{FoSiZa:13} that a zeroth order strategy, defined based on the leading order term $\vz$ in \eqref{eq_Vfexpansion}:
	\begin{equation}
	\pz = -\frac{\lambda(y,z)}{\sigma(y,z)}\frac{\vz_x(t,x,z)}{\vz_{xx}(t,x,z)},
	\end{equation}
	can reproduce $\Vf$ up to the first order correction, that is, the value function associated to $\pz$ takes the form 
	 $\vz+ \sqrt{\eps}\voz + \sqrt{\delta}\vzo + o(\sqrt{\eps} + \sqrt{\delta})$.

	 \medskip
	 \noindent{\bf Main results.} The goal of this paper is twofold: Firstly,  we rigorize the above assertion of $\pz$. To this end, we analyze the linear PDE satisfied by the problem value associated to $\pz$:
	 \begin{equation}
	 \Vfl := \EE\left\{U(X_T^\pz)\Big\vert X_t^\pz = x, Y_t = y, Z_t = z\right\},
	 \end{equation}
	 where $X_t^\pz$ is given in \eqref{def_Xt} with $\pi = \pz$. A rigorous first order approximation is obtained for $\Vfl$, which coincides with $\vz + \sqrt{\eps}\voz  + \sqrt{\delta}\vzo$. This leads to our first result.
	 	\begin{theo}\label{p2_Thm_full}
	 		Let $\vz$, $\voz$ and $\vzo$ be the coefficient functions from the heuristic expansion of $\Vf$ in \eqref{eq_Vfexpansion},
	 		identified in Section~\ref{p2_sec_multiheuristic}. Under Assumptions~\ref{assump_U} and \ref{p2_assump_valuefuncadd}
	 		the residual function $E(t,x,y,z)$ defined by
	 		$$E(t,x,y,z) := \Vfl(t,x,y,z) - \vz(t,x,z) - \sqrt{\eps}\voz(t,x,z) -\sqrt{\delta}\vzo(t,x,z), $$
	 		is of order $\eps + \delta$, for all $(t,x,y,z) \in [0,T]\times \RR^+\times \RR\times \RR$. That is, $\abs{E(t,x,y,z)} \leq C(\eps + \delta)$, where $C$ may depend on $(t,x,y,z)$ but not on $(\eps, \delta)$.
	 	\end{theo}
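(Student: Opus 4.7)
Since $\pz$ is a fixed feedback control, $\Vfl$ is an expectation rather than a supremum, and therefore solves a \emph{linear} parabolic Cauchy problem: the Feynman--Kac formula gives $\MCL^{\eps,\delta}\Vfl = 0$ on $[0,T)\times \RR^+\times\RR\times\RR$ with terminal datum $\Vfl(T,x,y,z)=U(x)$, where $\MCL^{\eps,\delta}$ is the generator of $(X^{\pz},Y,Z)$. Splitting this generator according to powers of $\sqrt{\eps},\sqrt{\delta}$ as
\begin{equation}
\MCL^{\eps,\delta} \;=\; \frac{1}{\eps}\MCL_0 \;+\; \frac{1}{\sqrt{\eps}}\MCL_1 \;+\; \MCL_2 \;+\; \sqrt{\delta}\,\MCM_1 \;+\; \delta\,\MCM_2 \;+\; \sqrt{\tfrac{\delta}{\eps}}\,\MCM_3,
\end{equation}
with $\MCL_0$ the generator of the ergodic fast factor $Y$, is the standard starting point for the analysis.

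The idea is to insert the candidate expansion $\Vfl \approx \vz + \sqrt{\eps}\,\voz + \sqrt{\delta}\,\vzo + \eps\,\vtz + \delta\,\vztw + \sqrt{\eps\delta}\,\voo$ into this equation, collect orders, and solve the resulting hierarchy. The divergent orders $\eps^{-1}$, $\eps^{-1/2}$, $\sqrt{\delta}/\eps$ each produce a Poisson equation in $y$ for $\MCL_0$; Fredholm solvability forces the lower-order coefficients to be $y$-independent and yields the PDEs that identify $\vz$, $\voz$, $\vzo$ as constructed in Section~\ref{p2_sec_multiheuristic}. The second-order correctors $\vtz$, $\vztw$, $\voo$ together with a few auxiliary $y$-dependent ``phantom'' pieces do not appear in the statement of the theorem; they are introduced only to absorb residual source terms of size between $1$ and $\eps+\delta$.

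Once a sufficiently refined approximation $\widetilde V$ is assembled in this way, direct substitution produces $\MCL^{\eps,\delta}\widetilde V = R^{\eps,\delta}$ with $\bigl|R^{\eps,\delta}(t,x,y,z)\bigr| \le C(\eps+\delta)$ pointwise, together with a terminal mismatch $\widetilde V(T,\cdot)-U$ of the same order. The residual $E$, after subtracting the negligible phantom contribution, therefore solves a linear parabolic Cauchy problem with small inhomogeneity and small terminal data, and a Feynman--Kac representation along the trajectory of $(X^{\pz},Y,Z)$ delivers the required pointwise bound $|E(t,x,y,z)|\le C(\eps+\delta)$.

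The main obstacle I expect is growth control. The Poisson equations are solvable only under the centering condition against the invariant distribution of $Y$, and their solutions must have controlled growth in $y$ so that the probabilistic representation above is finite; similarly the correctors $\vtz,\vztw,\voo$ and the phantoms involve derivatives of $\vz$ in $(x,z)$ whose polynomial growth and regularity must be furnished by Assumption~\ref{p2_assump_valuefuncadd}. Finally, because $U$ is a general utility on $\RR^+$, one must verify that the stochastic integrals arising in the Feynman--Kac argument along $X^{\pz}$ are genuine martingales rather than merely local ones; this is where the specifics of the admissibility class $\MCA^{\eps,\delta}$, the bounds on $U$, and the structure of $\lambda = \mu/\sigma$ encoded in Assumption~\ref{p2_assump_valuefuncadd} really do the work. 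With these ingredients in place, the $L^\infty$ estimate is a matter of bookkeeping.
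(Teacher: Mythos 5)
Your proposal follows essentially the same route as the paper: write the linear Feynman--Kac PDE for $\Vfl$, split the generator by powers of $\sqrt{\eps}$ and $\sqrt{\delta}$, identify $\vz,\voz,\vzo$ through the Poisson-equation/centering hierarchy, augment the expansion with higher-order $y$-dependent correctors (the paper's $\vtzl$, $\vthzl$, $\vtol$, your ``phantom'' pieces), and bound the resulting residual by a probabilistic representation along $(X^{\pz},Y,Z)$ combined with growth and integrability control (the paper does this via the bound $\abs{\MCD\vz}\leq k(z)\vz$, Cauchy--Schwarz, Lemma~\ref{p2_lem_unibddadd} and Proposition~\ref{p2_prop_risktoleranceadd}). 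The only caveat is that the source term is not pointwise $\MCO(\eps+\delta)$ uniformly in $(x,y,z)$ as you state---it carries polynomially growing factors, which is precisely why the paper passes through expectations of integrals rather than a pointwise bound---but you flag this growth-control issue yourself, so the plan is sound.
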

	 	The proof will be given in Section~\ref{p2_sec_multipz}.

	 Secondly, and more importantly, we show that $\pz$ outperforms any admissible strategy of a certain form. To be precise, we compare its performance to the one of 
	 \begin{equation}\label{def_pitilde}
	  \widetilde\pi^{\eps, \delta}= \pzt + \eps^\alpha \pozt + \delta^\beta\pzot,
	  \end{equation} 
	 for some processes $\pzt_t, \pozt_t, \pzot_t$ (not necessarily in the feedback form) and some positive powers $\alpha, \beta$. To this end, for fix choices of $\pzt$, $\pozt$ and $\pzot$ of which the assumptions are postponed to Section~\ref{p6_sec_optimality} and Appendix~\ref{p6_appendix_addasump},
	we denote by $\Vft$ the value process:
	 \begin{equation}\label{def_Vft}
	 \Vft_t = \EE\{U(X_T^\pi)\vert \MCF_t\}, \quad \pi = \widetilde \pi^{\eps, \delta} \text{ in } \eqref{def_pitilde},
	 \end{equation}
	 with $\MCF_t = \sigma(W_s, W_s^Y, W_s^Z, s \leq t)$.
	The asymptotic optimality of $\pz$ is then obtained by comparing the approximation of $\Vfl$ to the one of $\Vft$. We now summarize this result as follows.
	\begin{theo}\label{p6_thm_main}
			Under Assumptions~\ref{assump_U}, \ref{p2_assump_valuefuncadd},  \ref{p6_assump_piregularity} and \ref{p6_assump_optimality}, for any fixed choice of $\pzt, \pozt, \pzot, \alpha, \beta$,   the following limit exists in $L^1$ and satisfies
			\begin{equation}\label{def_inq}
			\ell := \lim_{(\eps, \delta) \to 0}\frac{\Vft_t - \Vfl(t,X_t, Y_t, Z_t)}{\sqrt{\eps} + \sqrt\delta} \leq 0, \text{ in } L^1.
			\end{equation}
			That is, the strategy $\pz$ that generates $\Vfl$ performs asymptotically better up to order $\sqrt\eps + \sqrt{\delta}$ than the strategy $\pitilde$ given in \eqref{def_pitilde}.
		\end{theo}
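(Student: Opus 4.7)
\emph{Proof plan.} The difficulty, as the authors emphasize, is that $\widetilde\pi^{\eps,\delta}$ is not assumed to be in feedback form, so the wealth $X^{\widetilde\pi}$ is not Markovian and no HJB equation is available for $\Vft$. The strategy is therefore to employ an Itô-based (``epsilon-martingale'') decomposition using the already smooth function $\Vfl$ as a test function, and to exploit the linear PDE satisfied by $\Vfl$ itself, for which $\pz$ \emph{is} a feedback strategy, so that the regularity obtained in Theorem~\ref{p2_Thm_full} applies.

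The first step is to apply Itô's formula to $\Vfl(s, X_s^{\widetilde\pi}, Y_s, Z_s)$ along the dynamics of $X^{\widetilde\pi}$ between $s = t$ and $s = T$. Using the terminal condition $\Vfl(T,\cdot) = U$, the identity $\MCL^{\pz}\Vfl = 0$, and the (true) martingale property of the stochastic integrals, taking conditional expectation yields
\begin{equation}
\Vft_t - \Vfl(t, X_t^{\widetilde\pi}, Y_t, Z_t) = \EE\left[\int_t^T \left(\MCL^{\widetilde\pi_s} - \MCL^{\pz}\right)\Vfl(s, X_s^{\widetilde\pi}, Y_s, Z_s)\, ds \,\Big|\, \MCF_t\right].
\end{equation}
Since $\MCL^{\pi}$ depends on $\pi$ only through the drift $\pi\mu\partial_x$, the diffusion $\tfrac{1}{2}\pi^2\sigma^2\partial_{xx}$, and the $\pi$-cross-variation terms with $Y,Z$, substituting the first-order approximation $\Vfl = \vz + \sqrt\eps\voz + \sqrt\delta\vzo + O(\eps + \delta)$ from Theorem~\ref{p2_Thm_full} and using the defining identity $\mu\vz_x = -\pz\sigma^2\vz_{xx}$ (i.e., $\pz$ is the pointwise maximizer of the leading quadratic) allows one to complete the square in $\widetilde\pi_s - \pz$ and obtain a leading non-positive term $\tfrac{1}{2}\sigma^2\vz_{xx}(\widetilde\pi_s - \pz)^2$, plus a mixed linear term of order $(\sqrt\eps + \sqrt\delta)(\widetilde\pi_s - \pz)$, plus an $O(\eps + \delta)$ remainder.

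The final step is to divide by $\sqrt\eps + \sqrt\delta$ and pass to the limit. Under Assumptions~\ref{p6_assump_piregularity} and \ref{p6_assump_optimality}, the linear mixed term and the $O(\eps + \delta)$ remainder both vanish in $L^1$, and dominated convergence identifies $\ell$ with the limiting conditional expectation of the non-positive quadratic, yielding $\ell \leq 0$ in $L^1$. The hardest part is precisely the control of the mixed term $(\sqrt\eps + \sqrt\delta)(\widetilde\pi - \pz)$, which a priori shares the order of the denominator and could in principle dominate the leading quadratic: Assumption~\ref{p6_assump_optimality} must force the deviation $\widetilde\pi - \pz$ to be sufficiently small (in an appropriate $L^2$-sense) relative to $\sqrt\eps + \sqrt\delta$. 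Auxiliary technical hurdles are verifying the true-martingale property of the Itô stochastic integrals along the non-Markovian $X^{\widetilde\pi}$ and the uniform integrability needed for the $L^1$ passage, both handled by the moment conditions imposed on $\pzt, \pozt, \pzot$.
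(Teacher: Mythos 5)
Your overall strategy---an It\^o/epsilon-martingale decomposition producing the non-positive quadratic $\tfrac{1}{2}\sigma^2\vz_{xx}(\widetilde\pi_s-\pz)^2$ as the sign-definite driving term---is the right one, and your completion of the square via $\mu\vz_x=-\pz\sigma^2\vz_{xx}$ reproduces exactly the decreasing processes $\widetilde N_t$, $\widehat N_t$ that give the paper its inequality. But there is a genuine gap in how you get there. You apply It\^o to $\Vfl(s,X_s^{\widetilde\pi},Y_s,Z_s)$ and then substitute the expansion $\Vfl=\vz+\sqrt\eps\,\voz+\sqrt\delta\,\vzo+\MCO(\eps+\delta)$ \emph{inside the second-order operator} $\MCL^{\widetilde\pi_s}-\MCL^{\pz}$. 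Theorem~\ref{p2_Thm_full} only bounds the residual $E$ pointwise; it says nothing about $E_x$, $E_{xx}$, $E_{xy}$, $E_{xz}$, all of which appear in $\left(\MCL^{\widetilde\pi_s}-\MCL^{\pz}\right)\Vfl$. Worse, the cross term carries a factor $\rho_1 a(y)\eps^{-1/2}(\widetilde\pi_s-\pz)\sigma\,\partial_x\partial_y\Vfl$, so you would need $\partial_x\partial_y E=o(\eps)$, a derivative estimate that is nowhere established and would require redoing the entire expansion analysis at the level of derivatives. The paper avoids this entirely: it never differentiates $\Vfl$ or $E$, but instead applies It\^o directly to the explicit functions $\vz$, $\voz$, $\vzo$ and the Poisson correctors $\theta$, $\theta_2$, whose derivatives are controlled by the risk-tolerance estimates (Proposition~\ref{p2_prop_risktoleranceadd}); Theorem~\ref{p2_Thm_full} is invoked only once, pointwise, at the very end, to replace $Q(t,X_t,Z_t)$ by $\Vfl(t,X_t,Y_t,Z_t)+\MCO(\eps+\delta)$.

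A second, smaller issue is your closing limit argument. The assumptions do \emph{not} force $\widetilde\pi-\pz$ to be small relative to $\sqrt\eps+\sqrt\delta$; the proof must split into two cases. If $\pzt\equiv\pz$, then $\widetilde\pi-\pz=\eps^\alpha\pozt+\delta^\beta\pzot\to0$, so the mixed linear terms are automatically $o(\sqrt\eps+\sqrt\delta)$ and the quadratic contributes a nonpositive limit. If $\pzt\not\equiv\pz$, the deviation is of order one, the mixed term does \emph{not} vanish after division by $\sqrt\eps+\sqrt\delta$, and the conclusion instead rests on the quadratic term being of order one and hence dominating everything else after division. Your single dominated-convergence step glosses over this dichotomy and, as stated, would fail in the second case.
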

		
The reason to consider such a form of $\widetilde \pi^{\eps, \delta}$ is the following. Under mild assumptions, the optimizer to problem \eqref{def_Vf}, denoted by $\pi^\ast$, exists \cite{KrSc:03}. Although $\pi^\ast$ has clear dependence on  $(\eps, \delta)$, it is unknown whether it will converge as $(\eps, \delta)$ go to zero. Supposing that  $\pi^\ast$ admits a limit, say $\pzt$, then it is natural to consider  $\widetilde \pi^{\eps, \delta}$ as a first order perturbation of the limiting $\pzt$. The  parameters $(\alpha, \beta)$ allow for corrections of any positive powers of $(\eps, \delta)$, giving more flexibility to  this perturbation.

Several remarks regarding to our results: firstly, our model considers two volatility factors, one fast and one slow, simultaneously. This extends our previous work \cite{FoHu:16} and \cite{Hu:18}, where the return $\mu$ and volatility $\sigma$ are driven by a single factor. In turn, it requires to combine together the regular and singular perturbation techniques, which are applied separately in aforementioned work. This involves nontrivial additional difficulties.
Secondly, we work with general utility functions, as oppose to a certain type of utility (power, exponential, log, etc.) considered by the majority of literature. This generalization is important since not everyone's utility is of CRRA type \cite{BrNa:08}. Thirdly, although we are not able to fully characterize $\Vf$ by justifying the expansion \eqref{p2_eq_Vfexpansion}, we partially answer this question by analyzing a suboptimal strategy $\pz$ which has the rigorous first order approximation coincide with the heuristics of $\Vf$ and outperforms any admissible stratetgy of the form \eqref{def_pitilde}. 

\medskip
\noindent{\bf Organization of the paper.} In Section~\ref{sec_PDE_multi}, we restate the multi-scale model and the heuristic expansion results in \cite[Section 4]{FoSiZa:13}. We then briefly review the classical Merton problem, which is closely related to the zeroth-order value $\vz$ in \eqref{p2_eq_Vfexpansion} and to the derivations in later sections. We also list all needed assumptions and lemmas as a preparation of later proofs. The performance of $\pz$-portfolio is analyzed and its first order approximation is rigorously derived in Section~\ref{p2_sec_multipz}. Section~\ref{p6_sec_optimality} is dedicated to the asymptotic optimality of $\pz$, as phrased in Theorem~\ref{p6_thm_main}, by comparing the performance of $\pitilde =\pzt + \eps^\alpha\pozt + \delta^\beta\pzot$ with $\pz$ up to the first order. We make conclusive remarks in Section~\ref{sec_conclusion}.

\section{Preliminaries and assumptions}\label{sec_PDE_multi}

In this section, we first detail the multiscale stochastic volatility modeling, review the Merton PDE and risk-tolerance function,  summarize the expansion results in \cite[Section~4]{FoSiZa:13}, and list model assumptions on the utility and state processes. 

Recall the stochastic environments driven by a fast factor $Y_t$ and a slow factor $Z_t$, the underlying asset follows
	\begin{align}\label{p2_eq_Stfull}
	&\ud S_t  = \mu(Y_t,Z_t)S_t \ud t + \sigma(Y_t,Z_t)S_t\ud W_t,\\
	&\ud Y_t = \frac{1}{\eps}b(Y_t)\ud t + \frac{1}{\sqrt{\eps}}a(Y_t)\ud W_t^Y,\label{p2_eq_Ytfull}\\
	&\ud Z_t = \delta c(Z_t)\ud t + \sqrt\delta g(Z_t)\ud W_t^Z, \label{p2_eq_Ztfull}
	\end{align}
	where the standard Brownian motions $(W_t,W_t^Y,W_t^Z)$ are correlated by:
	\begin{equation}
	\ud \average{W,W^Y}_t = \rho_1 \ud t, \quad \ud \average{W,W^Z}_t = \rho_2 \ud t, \quad \ud \average{W^Y,W^Z}_t = \rho_{12} \ud t,
	\end{equation}
	with positive definite constraints: $\abs{\rho_1} <1$, $\abs{\rho_2} <1$, $\abs{\rho_{12}} <1$ and $1+2\rho_1\rho_2\rho_{12}-\rho_1^2-\rho_2^2-\rho_{12}^2 > 0$. Assumptions on the coefficients $\mu(y,z)$, $\sigma(y,z)$, $b(y)$, $a(y)$, $c(z)$ and $g(z)$ of the model will be specified in Section~\ref{sec_modelassump}. Both $\eps$ and $\delta$ are small positive parameters that characterize the fast mean-reversion of $Y_t$ and  slow variation of $Z_t$, respectively. The time-changed process $Z_t \stackrel{\MCD}{=}Z_{\delta t}^{(1)}$ is continuous and possesses a $\delta$-free infinitesimal generator, denoted by $\MCM_2$:
	\begin{equation}\label{p1_eq_Mgenerator}
	\MCM_2 = \half g^2(z)\partial_{z}^2 + c(z)\partial_z.
	\end{equation}
 	Similarly, the process $Y_t^{(1)} \stackrel{\MCD}{=} Y_{\eps t}$ has the $\eps$-free infinitesimal generator:
 	\begin{equation}
 	\MCL_0 = \half a^2(y) \partial_y^2 + b(y) \partial_y.
 	\end{equation}
	To apply the singular perturbation, we assume that $Y^{(1)}$ is ergodic and equipped with a unique invariant distribution $\Phi$. We denote by $\average{\cdot}$ the average with respect to $\Phi$:
	\begin{equation}
	\average{f} = \int f\ud \Phi.
	\end{equation} 
	For further discussion on the model \eqref{p2_eq_Stfull}--\eqref{p2_eq_Ztfull}, including asymptotic results of option pricing as $(\eps, \delta) \to 0$, we refer to \cite{FoPaSiSo:11}.

	Recall from Section~\ref{sec_intro}  the wealth process $X_t^\pi$ associated to the feedback trading strategy $\pi$:
	\begin{equation*}
	\ud X_t^\pi = \pi(t,X_t^\pi,Y_t,Z_t)\mu(Y_t,Z_t)\ud t + \pi(t,X_t^\pi, Y_t,Z_t)\sigma(Y_t,Z_t) \ud W_t,
	\end{equation*}
	and the value function $\Vf(t,x,y,z)$: 
	\begin{equation*}
	\Vf(t,x,y,z)  = \sup_{\pi \in \MCA^{\eps, \delta}} \EE\left[U(X_T^\pi)\vert X_t^\pi = x, Y_t = y, Z_t = z\right],
	\end{equation*}
    Section~\ref{p2_sec_multiheuristic} reviews heuristic expansion results of $\Vf$ studied in \cite[Section 4]{FoSiZa:13}. Before that, we shall detour a bit by reviewing the classical Merton problem, as it plays an important role in the asymptotic analysis as well as in the later proofs. 
	
\subsection{Merton PDE and risk-tolerance function}	
	In Merton's original work \cite{Me:69,Me:71}, both return $\mu$ and volatility $\sigma$ in \eqref{p2_eq_Stfull} are constants. In this case, the wealth process denote by $X_t^\pi$ (with some abuse of notation) becomes
	\begin{equation}
	\ud X_t^\pi = \pi(t, X_t^\pi) \mu \ud t + \pi(t, X_t^\pi) \sigma \ud W_t.
	\end{equation}
	Following the notations in \cite{FoSiZa:13}, we denote by $M(t, x; \lambda)$ the corresponding Merton value function:
	 \begin{equation}\label{def_Merton}
	 M(t,x;\lambda) = \sup_{\pi} \EE[U(X_T^\pi)\vert X_t^\pi = x],
	 \end{equation}
	 where $\lambda$ is the Sharpe-ratio $\lambda = \mu / \sigma$.
	 The reason to show the explicit dependence on $\lambda$ is that  $M(t,x;\lambda)$ is characterized by the nonlinear equation
	 \begin{equation}\label{p1_eq_value}
	 M_t -\frac{1}{2}\lambda^2\frac{M_x^2}{M_{xx}} = 0, \quad M(T,x;\lambda) = U(x).
	 \end{equation}
	 where $\lambda$ appears as a parameter. Later on,  when identifying $\vz$ in \eqref{eq_Vfexpansion}, the notation $M$ will be used repeatedly with different $\lambda$. 
	 
	 The PDE \eqref{p1_eq_value} is obtained by applying dynamic programming principle which gives
	 \begin{equation*}
	 M_t+\sup_{\pi}\left\{\frac{1}{2}\sigma^2\pi^2M_{xx}+\mu\pi M_x\right\}=0,
	 \end{equation*}
	 and then plugging in the candidate of optimal strategy
	 \begin{equation}\label{p1_eq_pistar}
	 \pi^\star(t,x;\lambda)=-\frac{\lambda}{\sigma}\frac{M_x(t,x;\lambda)}{M_{xx}(t,x;\lambda)}.
	 \end{equation}
	 The {verification theorem}  \cite[Chapter 19] {Bj:09} ensures that solving the HJB equation also acts as a sufficient condition for the problem \eqref{def_object}. We next provide some results that are related to the later derivations. The proofs are omitted for the sake of brevity, and we refer readers to \cite[Section~2.1]{FoHu:16} for details.
	 \begin{prop}\label{p1_prop_Merton}
	 	Assume that the utility function $U(x)$ is $C^2(0,\infty)$, strictly increasing, strictly concave, such that $U(0+)$ is finite and satisfies the Inada and asymptotic elasticity conditions:
	 	\begin{equation*}
	 	U'(0+) = \infty, \quad U'(\infty) = 0, \quad \text{AE}[U] := \lim_{x\rightarrow \infty} x\frac{U'(x)}{U(x)} <1,
	 	\end{equation*}
	 	then, the Merton value function $M(t,x;\lambda)$ is strictly increasing, strictly concave in the wealth variable $x$, and decreasing in the time variable $t$. It is $C^{1,2}([0,T]\times \RR^+)$ and is the unique solution to equation \eqref{p1_eq_value}. It is  $C^1$ with respect to $\lambda$, and the optimal portfolio $\pi^\ast$ is given by \eqref{p1_eq_pistar}.
	 \end{prop}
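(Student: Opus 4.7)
The plan is to derive $M(t,x;\lambda)$ by convex duality in the complete Black--Scholes market with constant coefficients, and then read off each assertion from the resulting explicit representation. Introduce the convex conjugate $\tilde U(y):=\sup_{x>0}[U(x)-xy]$; under the Inada conditions, $\tilde U$ is $C^2$, strictly convex, and $-\tilde U'$ equals $I:=(U')^{-1}$, which maps $(0,\infty)$ onto itself. Since $\mu$ and $\sigma$ are constants, the market is complete with unique state-price density from $t$ to $T$ given by the lognormal $H_{t,T}=\exp(-\lambda(W_T-W_t)-\tfrac12\lambda^2(T-t))$; the static budget-constraint argument (Karatzas--Shreve; see also \cite{FoHu:16}) identifies
\begin{equation*}
M(t,x;\lambda)=\EE\bigl[U(I(\hat y\, H_{t,T}))\bigr],\qquad \hat y=\hat y(t,x;\lambda)\ \text{ solves }\ \EE[H_{t,T}\,I(\hat y\,H_{t,T})]=x.
\end{equation*}
The asymptotic-elasticity condition AE$[U]<1$ is what guarantees existence and uniqueness of $\hat y$, together with the integrability of $I(\hat y\,H_{t,T})$ and its $y$-derivatives.

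From this representation I would extract each listed property in turn. Strict monotonicity and strict concavity of $M(t,\cdot;\lambda)$ are inherited from the Legendre-transform relation $M(t,x;\lambda)=\inf_{y>0}\{\tilde M(t,y;\lambda)+xy\}$ with the strictly convex, smooth dual $\tilde M(t,y;\lambda):=\EE[\tilde U(y\,H_{t,T})]$, where smoothness in $(t,y,\lambda)$ is obtained by differentiation under the expectation (AE$[U]<1$ supplying the integrable dominants needed to exchange $\partial_t$, $\partial_y$, $\partial_\lambda$ with $\EE$). Monotonicity of $M$ in $t$ is immediate by a dynamic-programming inclusion: any strategy admissible on $[t+h,T]$ can be extended by zero to $[t,T]$ with identical terminal wealth, so shortening the horizon lowers the supremum. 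The $C^{1,2}$ regularity of $M$ in $(t,x)$ and its $C^1$ dependence on $\lambda$ then follow from the implicit-function theorem applied to $\EE[H_{t,T}\,I(\hat y\,H_{t,T})]=x$, using $\tilde M_{yy}>0$.

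Given this regularity, the standard dynamic-programming derivation produces the HJB equation $M_t+\sup_\pi\{\tfrac12\sigma^2\pi^2 M_{xx}+\mu\pi M_x\}=0$ with terminal datum $U$; strict concavity of $M$ in $x$ makes the inner supremum explicit and yields $\pi^\star=-(\lambda/\sigma)M_x/M_{xx}$ and, upon substitution, the nonlinear PDE \eqref{p1_eq_value}. Uniqueness among classical solutions is handled by verification: for any classical solution $\bar M$ with terminal value $U$, applying It\^o's formula to $\bar M(s,X^\pi_s)$ along an arbitrary admissible $\pi$ gives $\bar M(t,x)\ge\EE[U(X_T^\pi)]$, with equality along $\pi^\star$, hence $\bar M=M$.

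The principal obstacle is justifying the smoothness of $M$ and of $\hat y$ under only the Inada and AE$[U]<1$ hypotheses. This requires careful tail bounds on $I(y\,H_{t,T})$ and its $y$-derivatives to legitimize differentiating through the expectation; it is precisely here that AE$[U]<1$ is used, via the classical Kramkov--Schachermayer estimates. Once regularity is secured, the remaining assertions follow from the computations above and the verification theorem cited in \cite[Chapter~19]{Bj:09}, paralleling the treatment in \cite[Section~2.1]{FoHu:16}.
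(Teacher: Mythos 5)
The paper omits the proof of this proposition and defers to \cite[Section~2.1]{FoHu:16}, where the argument is exactly the convex-duality/martingale representation you outline (dual value $\tilde M(t,y)=\EE[\tilde U(yH_{t,T})]$, Legendre inversion, implicit function theorem for $\hat y$, then verification), so your reconstruction is correct and follows essentially the same route. The only point worth flagging is that uniqueness for \eqref{p1_eq_value} should be understood within the class of strictly concave classical solutions, since your verification step needs $\bar M_{xx}<0$ for the pointwise supremum over $\pi$ to be finite and attained.
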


	 Next, we define the {risk-tolerance function} $R(t,x;\lambda)$ associated with the classical Merton value function:
	 \begin{equation}\label{p1_def_risktolerance}
	 R(t,x;\lambda) = -\frac{M_x(t,x;\lambda)}{ M_{xx}(t,x;\lambda)}.
	 \end{equation}
	 It plays an important role in the analysis in later chapters. Note that $R(t,x;\lambda)$ is well-defined, continuous, strictly positive due to the regularity, strict concavity and monotonicity of $M (t,x;\lambda)$. Following the notation in \cite{FoSiZa:13}, we define the differential operators in terms of $R(t,x;\lambda)$
	 \begin{align}\label{p1_def_dk}
	 D_k(\lambda) &= R(t,x;\lambda)^k \partial_x^k, \qquad k = 1,2, \cdots,\\
	 \Ltx(\lambda) &= \partial_t + \frac{1}{2}\lambda^2D_2(\lambda) + \lambda^2D_1(\lambda).\label{p1_def_ltx}
	 \end{align}
	 Note that the coefficients of $\Ltx(\lambda)$ depend on $R(t,x;\lambda)$, and consequently on $M(t,x;\lambda)$. Thus, the Merton PDE \eqref{p1_eq_value} can be rewritten in a ``linear'' manner
	 \begin{align}\label{p1_eq_mertonlinear}
	 &\Ltx(\lambda)M(t,x;\lambda) = 0,
	 \end{align}
	 and this PDE possesses a unique nonnegative solution.
	 
	 \begin{prop}\label{p1_prop_ltxunique}
	 	Let $\Ltx(\lambda)$ be the operator defined in \eqref{p1_def_ltx}, and assume that the utility function $U(x)$ satisfies the conditions in Proposition \ref{p1_prop_Merton}, then
	 	\begin{equation}\label{p1_def_ltxpde}
	 	\Ltx(\lambda)u(t,x;\lambda) = 0,\quad
	 	u(T,x;\lambda) = U(x),
	 	\end{equation}
	 	has a unique nonnegative solution. Consequently, this PDE with zero terminal condition possesses only trivial solution. 
	 \end{prop}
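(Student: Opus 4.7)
The plan is to reinterpret $\Ltx(\lambda)$ probabilistically as the generator of the wealth process under the Merton-optimal allocation $\pi^\star = \lambda R/\sigma$ from \eqref{p1_eq_pistar}. Concretely, I would introduce the auxiliary diffusion
\[
dX_s = \lambda^2 R(s, X_s;\lambda)\, ds + \lambda R(s, X_s;\lambda)\, dW_s, \qquad X_t = x > 0,
\]
which is well-posed on $[t,T]$ and stays strictly positive because $R$ is continuous, strictly positive on $\RR^+$, and---by the $\mathrm{AE}[U]<1$ assumption---of at most linear growth in $x$. The whole argument then reduces to a Feynman-Kac identification of any nonnegative solution of the PDE with a conditional expectation of $U(X_T)$.

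A one-line application of Ito's formula shows that any $C^{1,2}$ solution $u$ of $\Ltx(\lambda)u = 0$ obeys
\[
u(s, X_s) = u(t,x) + \int_t^s \lambda R(r, X_r;\lambda)\, u_x(r, X_r)\, dW_r,
\]
so $u(s, X_s)$ is a local martingale on $[t,T]$. Specialized to $u = M$, Proposition~\ref{p1_prop_Merton} and the value-function interpretation of $M$ promote $M(s, X_s)$ to a genuine martingale with $M(t,x;\lambda) = \EE[U(X_T)\mid X_t=x]$. For any other nonnegative solution $u$, nonnegativity turns the local martingale $u(s,X_s)$ into a supermartingale, yielding the one-sided bound $u(t,x) \geq \EE[u(T, X_T)\mid X_t=x] = M(t,x;\lambda)$.

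The main obstacle is the reverse inequality $u \leq M$, i.e.\ upgrading the nonnegative supermartingale $u(s, X_s)$ to a true martingale. My approach is to localize along $\tau_n := \inf\{s \geq t : X_s \notin (1/n, n)\} \wedge T$, exploit moment bounds on $X_s$ coming from the linear growth of $R$, and pass to the limit by dominated convergence on $\{\tau_n = T\}$ combined with supermartingale tail estimates on $\{\tau_n < T\}$. The integrability of $U(X_T)$ afforded by the hypotheses of Proposition~\ref{p1_prop_Merton} is the ingredient that closes dominated convergence. An equivalent parallel route is to appeal to classical uniqueness theorems for linear parabolic PDEs on an exhausting family of bounded subdomains of $(0,\infty)$ with appropriate barrier conditions at the endpoints.

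The zero-terminal consequence then follows directly: if $w \geq 0$ solves $\Ltx(\lambda) w = 0$ with $w(T,\cdot) = 0$, then both $M$ and $M + w$ are nonnegative solutions of the PDE with terminal $U$, so the uniqueness just established forces $w \equiv 0$. Overall, Ito's formula, supermartingale theory, and the distinguished martingale role of $M$ supply the conceptual skeleton; the genuine technical burden is concentrated in the martingale-upgrade step.
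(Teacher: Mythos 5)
Note first that the paper does not actually prove this proposition: it is stated with ``the proofs are omitted for the sake of brevity'' and a pointer to \cite[Section~2.1]{FoHu:16}, so the comparison can only be against the argument of that reference. Your probabilistic setup is the natural one, and the first half is sound: under $\ud X_s = \lambda^2 R(s,X_s;\lambda)\ud s + \lambda R(s,X_s;\lambda)\ud W_s$ (the optimally controlled wealth), It\^{o}'s formula turns any $C^{1,2}$ solution of $\Ltx(\lambda)u=0$ into a local martingale, nonnegativity upgrades it to a supermartingale, and the martingale property of $M(s,X_s)$ gives $u(t,x)\ge \EE[U(X_T)\mid X_t=x]=M(t,x;\lambda)$. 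In particular $w:=u-M\ge 0$ solves the PDE with zero terminal data, which correctly reduces everything to the ``consequently'' clause.

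The gap is the reverse inequality, and the localization you propose does not close it. Writing $u(t,x)=\EE\bigl[u(T,X_T)\mathds{1}_{\{\tau_n=T\}}\bigr]+\EE\bigl[u(\tau_n,X_{\tau_n})\mathds{1}_{\{\tau_n<T\}}\bigr]$, the first term tends to $M(t,x)$ by monotone convergence, but nothing forces the second term to vanish: on $\{\tau_n<T\}$ the process sits at $x=n$ or $x=1/n$, and the hypotheses impose \emph{no growth restriction on $u$ whatsoever} near $x=\infty$ or $x=0+$ --- only nonnegativity. The integrability of $U(X_T)$ controls the terminal slice, not the lateral values $u(\tau_n,n)$, so there is no dominating function, and ``large value times small probability'' can stay bounded away from zero (this is exactly how nonnegative strict local martingales behave). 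The ``exhausting subdomains'' variant fails for the same reason: parabolic uniqueness on a bounded subdomain needs lateral boundary data, which an arbitrary competitor does not supply. Uniqueness of \emph{nonnegative} solutions without growth conditions is a Widder-type theorem, not a localization exercise; one needs either (a) an a priori Gaussian-type bound on every nonnegative solution via a parabolic Harnack inequality, followed by Tychonoff-class uniqueness, or (b) the route of the cited reference, namely the heat-equation representation of the Merton problem associated with $R$ (the K\"allblad--Zariphopoulou / Musiela--Zariphopoulou transformation), which conjugates $\Ltx(\lambda)$ to the constant-coefficient heat operator and lets one invoke Widder's theorem directly. As written, the ``martingale-upgrade step'' you flag as the technical burden is precisely the missing proof. (Two smaller points: positivity of $X$ at the origin follows from $R(t,0+;\lambda)=0$ and a boundary classification, not from strict positivity of $R$ on $\RR^+$; and the linear growth of $R$ comes from the uniform bound on $R_x$, not directly from $\mathrm{AE}[U]<1$.)
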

	
	\subsection{Multiscale asymptotic expansions}\label{p2_sec_multiheuristic}
	We now summarize some existing heuristics derived in \cite[Section~4]{FoSiZa:13}. By dynamic programing, $\Vf$ solves the following HJB equation:
	\begin{equation}\label{p2_eq_VfPDE}
	\left(\partial_t + \frac{1}{\eps} \MCL_0 + \delta \MCM_2 + \sqrt{\frac{\delta}{\eps}}\MCM_3\right)\Vf   - \frac{\left(\lambda(y,z)\Vf_x + \frac{\rho_1a(y)}{\sqrt\eps} \Vf_{xy} + \sqrt{\delta}\rho_2g(z)\Vf_{xz}\right)^2}{2\Vf_{xx}} = 0,
	\end{equation}
	with the candidate of the optimal strategy
	\begin{equation*}
	\pi^\star = -\frac{\lambda(y,z)\Vf_x}{\sigma(y,z)\Vf_{xx}} - \frac{\rho_1a(y)\Vf_{xy}}{\sqrt\eps\sigma(y,z)\Vf_{xx}}- \frac{\sqrt{\delta} \rho_2g(z)\Vf_{xz}}{\sigma(y,z)\Vf_{xx}},
	\end{equation*}
	where $\MCM_3$ is defined as:
	\begin{equation}
	\MCM_3 = \rho_{12}a(y)g(z)\partial_y\partial_z,
	\end{equation}
	and $\lambda(y,z) = \mu(y,z) / \sigma(y,z)$ is the Sharpe ratio function.
	
	In general, $\Vf$ is only identified as the viscosity solution of the above HJB equation \cite[Section 4]{Phbook:09}. However, to apply asymptotic derivations,  \cite{FoSiZa:13} assume that $\Vf$ is smooth in every variable, strictly increasing, strictly concave in  the wealth argument $x$ for each $(y,z)$ in $\RR^2$ and $t \in [0,T)$, and is the unique classical solution to \eqref{p2_eq_VfPDE}. We emphasize that  results in this paper do not rely on the regularity of $\Vf$, as we will work with $\Vfl$ defined in \eqref{p2_def_Vfl}, which will be classical solution of the linear PDE \eqref{p2_eq_Vfl}.
	
	The multiscale expansion consists of constructing a power series of $\delta$ for $\Vf$:
	\begin{equation}
	\Vf = V^{\eps,0} + \sqrt{\delta} V^{\eps,1} + \cdots
	\end{equation}
	 and then a power series in $\eps$ for each term $V^{\eps, k}$:
	 \begin{equation}
	 V^{\eps,k} = v^{(0,k)} + \sqrt{\eps} v^{(1,k)} + \eps v^{(2,k)} + \cdots, \quad \forall k \in \NN.
	 \end{equation}
	  At each step, the coefficients $V^{\eps, k}$ or $v^{(j,k)}$ are identified by substituting the expansion into the corresponding equation and collecting terms of different orders. Because the whole analysis will be performed on $\Vfl$ again in Section~\ref{p2_sec_multipz}, we decide to skip the derivation here and jump to the results. The combined expansion in slow and fast scale of $\Vf$ is of the following form:
	\begin{equation}\label{p2_eq_Vfexpansion}
	\Vf = \vz + \sqrt{\eps}\voz + \sqrt{\delta}\vzo + \eps\vtz + \delta \vztw + \sqrt{\eps\delta}\voo + \cdots,
	\end{equation}
	where the superscript of $v$ corresponds to the power in $\sqrt\eps$ and $\sqrt\delta$ and where  $v^{(0,0)}$ is rewritten as $v^{(0)}$.  Formulations about $\vz$, $\voz$ and $\vzo$ are given as follows. 
	\begin{enumerate}[(i)]
		
		\item The \emph{leading order term} $\vz$ is defined as the solution to the Merton PDE associated with the ``averaged'' Sharpe ratio $\aves(z)=\sqrt{\average{\lambda^2(\cdot,z)}}$,
		\begin{equation}\label{p2_eq_vzf}
		\vz_t - \half\aves^2(z)\frac{\left(\vz_x\right)^2}{\vz_{xx}} = 0, \quad \vz(T,x,z) = U(x).
		\end{equation}
		Since it  possesses a unique solution, we have
		\begin{equation}\label{p2_eq_vzfvsmerton}
		\vz(t,x,z) = M(t,x;\aves(z)).
		\end{equation}
		Accordingly, the version of $D_k(\lambda)$ that will be used in the sequel is $D_k(\aves) = R(t,x;\aves(z))^k \partial_x^k$ under the multiscale stochastic environment, and we shall use $D_k$ for brevity (omitting the argument $\aves$).
		
		\item The \emph{first order correction} in the fast variable $\voz$ is defined as the solution to the linear PDE:
		\begin{equation}\label{p2_eq_vozf}
		\voz_t + \half\aves^2(z)\left(\frac{\vz_x}{\vz_{xx}}\right)^2\voz_{xx}-\aves^2(z)\frac{\vz_x}{\vz_{xx}}\voz_{x} = \half \rho_1B(z)D_1^2\vz, \quad \voz(T,x,z) = 0,
		\end{equation}
		which admits a unique solution. Then $\voz$ is explicitly given in terms of $\vz$ by
		\begin{equation}\label{p2_def_vozf}
		\voz(t,x,z) = -\half(T-t)\rho_1B(z)D_1^2\vz(t,x,z),
		\end{equation}
		where
		\begin{equation}
		B(z) = \average{\lambda(\cdot,z)a(\cdot)\partial_y\theta(\cdot,z)}, \quad \MCL_0\theta(y,z) = \lambda^2(y,z) - \aves^2(z).
		\end{equation}
		Note that in the solution $\theta(y,z)$ to the above Poisson equation, the variable $z$ can be treated as a parameter.
		
		\item The \emph{first order correction} in the slow variable $\vzo$ is defined as the solution to the linear PDE:
		\begin{align}\label{p2_eq_vzof}
		&\vzo_t + \half\aves^2(z)\left(\frac{\vz_x}{\vz_{xx}}\right)^2\vzo_{xx}-\aves^2(z)\frac{\vz_x}{\vz_{xx}}\vzo_x - \rho_2\widehat \lambda(z)g(z)\frac{\vz_x}{\vz_{xx}}\vz_{xz} =0, \\		
		&\vzo(T,x,z) = 0,
		\end{align}
		which has a unique solution, and where $\widehat \lambda(z)$ is given by
		\begin{equation}
		\widehat \lambda(z) = \average{\lambda(\cdot,z)}.	
		\end{equation}
		
		\item By the``Vega-Gamma'' relation, the $z$-derivative of the leading order term $\vz$ satisfies
		\begin{equation}\label{p2_eq_vegagamma}
		\vz_z(t,x,z) = (T-t)\aves(z)\aves'(z)D_1\vz(t,x,z),
		\end{equation}
		and $\vzo$ can be expressed in terms of $\vz$ by
		\begin{multline}\label{p2_eq_vzo}
		\vzo(t,x,z) = \half(T-t)\rho_2\widehat{\lambda}(z)g(z)D_1\vz_z(t,x,z) \\ =\half(T-t)^2\rho_2\widehat{\lambda}(z)\aves(z)\aves'(z)g(z)D_1^2\vz(t,x,z).
		\end{multline}
	\end{enumerate}

Note that the uniqueness in (i)--(iii) follows from Proposition~\ref{p1_prop_Merton} and \ref{p1_prop_ltxunique}.

	\subsection{Model assumptions and preliminary estimates}\label{sec_modelassump}
	There are two sets of assumptions needed for results presented in this paper: one about the state process, and one on the general utility. The first set is basically the combination of Assumption~2.12 in our previous work \cite{FoHu:16} considering solely the slow factor, and Assumption~2.4 in the fast case \cite{Hu:18}, except that formulations based on $\lambda(z)$ (slow case) and $\aves$ (fast case) are all shifted to the multiscale case $\aves(z)$. The second set extends Assumption~2.5 in \cite{FoHu:16} by requiring more regularity of $U(x)$ and more boundedness constraints on the risk-tolerance $R(x) = -U'(x)/U''(x)$. For completeness, we next present them in details.

	For fixed $(t,z)$, we observe that, $\vz(t,x,z) = M(t,x; \aves(z))$ is a concave function that has a linear upper bound. In fact, for $t=0$, there exists a function $\overline G(z)$, so that $$\vz(0,x,z) \leq \overline G(z) + x, \;\forall (x,z) \in \RR^+\times \RR.$$ Let $X_t^\pz$ be the the wealth process following $\pz$, and define $\pz$ in terms of model parameters and the zeroth order term $\vz(t,x,z)$:
	\begin{equation} \label{p2_eq_pzfull}
	\pz = -\frac{\lambda(y,z)}{\sigma(y,z)}\frac{\vz_x(t,x,z)}{\vz_{xx}(t,x,z)}.
	\end{equation}

	\begin{assump}\label{p2_assump_valuefuncadd} We make the following assumptions on the state processes $(S_t,Y_t, Z_t, X_t^\pz)$:
		\begin{enumerate}[(i)]
			\item\label{p2_assump_valuefuncSZadd} For any starting points $(s, y, z)$ and fixed $(\eps, \delta)$, the system of SDEs \eqref{p2_eq_Stfull}--\eqref{p2_eq_Ytfull}--\eqref{p2_eq_Ztfull} has a unique strong solution $(S_t, Y_t, Z_t)$. The function $g(z)$ is in $C^2(\RR)$, and $\lambda(y,z)$ is in $C^3(\RR)$ in the $z$-variable.
			The coefficients $g(z)$, $c(z)$, $a(y)$ $\lambda(y, z)$ as well as their derivatives $g'(z)$, $g''(z)$, $\lambda_z(y,z)$, $\lambda_{zz}(y,z)$, and $\lambda_{zzz}(y, z)$ are at most polynomially growing.
			
			\item The process $Y^{(1)}$ with infinitesimal generator $\MCL_0$ is ergodic with a unique invariant distribution $\Phi$, and admits moments of any order uniformly in $t \leq T$:
			\begin{equation}
			\sup_{t \leq T} \left\{\EE\abs{Y_t^{(1)}}^k\right\} \leq C(T,k).
			\end{equation}
			The solution $\phi(y,z)$ of the Poisson equation (in $y$) $\MCL_0\phi(y,z) = \ell(y,z)$ is assumed to be polynomial for polynomial (in $y$) function $\ell(y,z)$.
			\item\label{p2_assump_valuefuncZmomentadd} The process $Z^{(1)}$ with infinitesimal generator $\MCM_2$ defined in \eqref{p1_eq_Mgenerator} admits moments of any order uniformly in $t \leq  T$:
			\begin{equation}
			\sup_{t \leq T}\left\{ \EE\abs{Z_t^{(1)}}^k\right\} \leq C(T,k).
			\end{equation}
			\item\label{p2_assump_valuefuncGadd}  The process $\overline G(Z_\cdot)$ is in $L^2([0,T]\times \Omega)$ uniformly in $\delta$, i.e.,
			\begin{equation}\label{p2_assump_Gzadd}
			\EE_{(0,z)}\left[\int_0^T \overline G^2(Z_s) \ud s\right] \leq C_1(T,z),
			\end{equation}
			where $C_1(T,z)$ is independent of $\delta$ and $Z_s$ follows \eqref{p2_eq_Ztfull} with $Z_0 = z$.
			\item\label{p2_assump_valuefuncXadd} The wealth process $X_\cdot^\pz$ {stays nonnegative, namely, $\pz \in \MCA^{\eps,\delta}(t,x,y, z)$ $\forall 0<\eps,\delta\leq 1$}. {Moreover, it}
			is in  $L^2([0,T]\times \Omega)$ uniformly in $(\eps, \delta)$ , i.e.,
			\begin{equation}\label{p2_assump_Xsquareadd}
			\EE_{(0,x,y, z)}\left[\int_0^T \left(X_s^\pz\right)^2 \ud s\right] \leq C_2(T,x,y, z),
			\end{equation}
			where $C_2(T,x,y, z)$ is independent of $(\eps, \delta)$.
		\end{enumerate}
	\end{assump}
	
	\begin{lem}\label{p2_lem_unibddadd}
		Under Assumption~\ref{p2_assump_valuefuncadd}\eqref{p2_assump_valuefuncGadd}-\eqref{p2_assump_valuefuncXadd}, the process $\vz(\cdot, X_\cdot^\pz,Z_\cdot)$ is in $L^2([0,T]\times \Omega)$ uniformly in $(\eps,\delta)$, i.e. $\forall (t,x,y, z) \in [0,T]\times\RR^+\times \RR\times \RR$, we have
		\begin{equation}
		\EE_{(t,x,y, z)}\left[\int_t^T \left(\vz(s,X_s^\pz, Z_s)\right)^2 \ud s\right] \leq C_3(T,x,y,z),
		\end{equation}
		where $\vz(t,x,z)$ is defined in Section \ref{p2_sec_multiheuristic} and satisfies $\vz(t,x,z) = M(t,x;\aves(z))$.
	\end{lem}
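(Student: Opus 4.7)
The plan is to exploit the identification $\vz(t,x,z)=M(t,x;\aves(z))$ from \eqref{p2_eq_vzfvsmerton} together with the monotonicity and concavity properties of the Merton value function recalled in Proposition~\ref{p1_prop_Merton}, so that $\vz$ evaluated along the state process is sandwiched between two tractable bounds, and then to integrate and invoke the uniform $L^2$-moment assumptions on $Z_{\cdot}$ and $X_{\cdot}^{\pz}$ supplied by Assumption~\ref{p2_assump_valuefuncadd}\eqref{p2_assump_valuefuncGadd}--\eqref{p2_assump_valuefuncXadd}.

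\emph{Step 1 (upper bound along the trajectory).} The linear estimate $\vz(0,x,z)\le \overline G(z)+x$ is stated only at $t=0$ in the paragraph before Assumption~\ref{p2_assump_valuefuncadd}. I would extend it to every $s\in[t,T]$ by invoking the fact, recorded in Proposition~\ref{p1_prop_Merton}, that $M(\cdot,x;\lambda)$ is decreasing in the time variable. Applied at $x=X_s^{\pz}$ and $\lambda=\aves(Z_s)$, this yields
$$\vz(s,X_s^{\pz},Z_s)=M(s,X_s^{\pz};\aves(Z_s))\le M(0,X_s^{\pz};\aves(Z_s))\le \overline G(Z_s)+X_s^{\pz}.$$

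\emph{Step 2 (lower bound).} Since $M(s,\cdot;\lambda)$ is strictly increasing in wealth and $M(s,0;\lambda)=U(0+)$ (the nonnegativity constraint on wealth forces $X_T\equiv 0$ when starting from zero wealth), I obtain the uniform lower bound $\vz(s,X_s^{\pz},Z_s)\ge U(0+)$, which is finite by the standing hypothesis on $U$. Combining with Step~1 gives, for a constant $C$ depending only on $|U(0+)|$,
$$\bigl(\vz(s,X_s^{\pz},Z_s)\bigr)^{2}\le C\bigl(1+\overline G(Z_s)^{2}+(X_s^{\pz})^{2}\bigr).$$

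\emph{Step 3 (assemble).} I would integrate this pointwise bound on $[t,T]$, take expectation, and apply Fubini. Because the dynamics \eqref{p2_eq_Ztfull} of $Z_{\cdot}$ are time-homogeneous, $\EE_{(t,z)}\!\int_t^T \overline G^{2}(Z_s)\,ds=\EE_{(0,z)}\!\int_0^{T-t}\overline G^{2}(Z_s)\,ds$, which is bounded by $C_1(T,z)$ via Assumption~\ref{p2_assump_valuefuncadd}\eqref{p2_assump_valuefuncGadd}; the $(X_s^{\pz})^{2}$ term is controlled uniformly in $(\eps,\delta)$ by Assumption~\ref{p2_assump_valuefuncadd}\eqref{p2_assump_valuefuncXadd}. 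Summing these estimates produces a bound of the form $C_3(T,x,y,z)$ that is independent of $(\eps,\delta)$, which is exactly the conclusion.

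\emph{Main obstacle.} The only nontrivial point is verifying that the linear majorant, asserted only at $t=0$ in the setup, propagates to every later time along the controlled trajectory; this is where the monotonicity of the Merton value $M(t,x;\lambda)$ in $t$ (hence the monotonicity of $\vz(t,x,z)$ in $t$ through \eqref{p2_eq_vzfvsmerton}) is essential. The rest is a routine application of the uniform moment bounds already assumed.
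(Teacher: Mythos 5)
Your argument is correct and is essentially the paper's own proof: the paper simply defers to \cite[Lemma~2.15]{FoHu:16}, whose reasoning is exactly the chain you give — nonnegativity of $\vz$ from $U(0+)=0$, the linear majorant $\overline G(Z_s)+X_s^{\pz}$ propagated to all times via the monotonicity of $M(t,x;\lambda)$ in $t$, and then the uniform $L^2$ bounds of Assumption~\ref{p2_assump_valuefuncadd}\eqref{p2_assump_valuefuncGadd}--\eqref{p2_assump_valuefuncXadd}. No gaps.
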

	\begin{proof}
		The proof follows the argument in \cite[Lemma~2.15]{FoHu:16}.
	\end{proof}		
	
		\begin{assump}\label{assump_U}
			We make the following assumptions on $U(x)$:
			\begin{enumerate}[(i)]
				\item\label{p1_assump_Uregularity}  U(x) is $C^9(0,\infty)$, strictly increasing, and strictly concave and satisfies the following conditions (Inada and asymptotic elasticity):
				\begin{equation}\label{p1_eq_usualconditions}
				U'(0+) = \infty, \quad U'(\infty) = 0, \quad \text{AE}[U] := \lim_{x\rightarrow \infty} x\frac{U'(x)}{U(x)} <1.
				\end{equation}
				\item\label{p1_assump_Ubddbelow}U(0+) is finite. Without loss of generality, we assume U(0+) = 0.
				\item\label{p1_assump_Urisktolerance} Assume the risk tolerance $R(x) := -U'(x) / U''(x)$ satisfies $R(0) = 0$, strictly increasing, $R'(x) < \infty$ on $[0,\infty)$, and there exists $K\in\RR^+$, such that for $x \geq 0$, and $ 2\leq i \leq 7$,
				\begin{equation}\label{p1_assump_Uiii}
				\abs{\partial_x^{(i)}R^i(x)} \leq K.
				\end{equation}
				\item\label{p1_assump_Ugrowth} Define the inverse function of the marginal utility $U'(x)$ as $I: \RR^+ \to \RR^+$, $I(y) = U'^{(-1)}(y)$, and assume that, for some positive $\alpha$, $I(y)$ satisfies the polynomial growth condition:
				\begin{equation}\label{p1_cond_I}
				I(y) \leq \alpha + \kappa y^{-\alpha}.
				\end{equation}
			\end{enumerate}
		\end{assump}

Note that Assumption~\ref{assump_U}\eqref{p1_assump_Ubddbelow} is a sufficient condition, and excludes the case $U(x) = \frac{x^\gamma}{\gamma}$, for $\gamma < 0$, and $U(x) = \log(x)$. However, all theorem in this paper still hold under minor modifications to the proof.  Further discussion about the above assumptions, regarding examples, restrictiveness and implication can be found in \cite[Section~2.3]{FoHu:16}.

We next give some estimate of the risk-tolerance function \eqref{p1_def_risktolerance}. Since in the multiscale regime, the zeroth order term $\vz(t,x,z)$ is identified as $M(t,x;\aves(z))$, see equation \eqref{p2_eq_vzfvsmerton}, the notation of the risk-tolerance function is changed accordingly to $R(t,x;\aves(z))$ with
\begin{equation}
R(t,x;\aves(z)) := -\frac{\vz(t,x,z)}{\vz(t,x,z)} = -\frac{M_x(t,x;\aves(z))}{M_{xx}(t,x;\aves(z))}
\end{equation}
to emphasis the dependence of $\aves(z)$. 
\begin{prop}\label{p2_prop_risktoleranceadd}
	Under Assumption~\ref{assump_U} of the general utility, the risk-tolerance $R(t,x;\aves(z))$ function satisfies the following:  $\exists K_j >0$ for $0 \leq j  \leq  6$, such that $\forall (t,x,\aves(z)) \in [0,T) \times \RR^+ \times \RR$,
	\begin{equation}
	\abs{R^j(t,x;\aves(z))(\partial_x^{(j+1)}R(t,x;\aves(z)))} \leq K_j.
	\end{equation}
	Or equivalently, $\forall 1 \leq j \leq 7$, there exists $\widetilde K_j >0$, such that $\forall (t,x,z) \in [0,T) \times \RR^+ \times \RR$,
	\begin{equation}
	\abs{\partial_x^{(j)} R^j(t,x;\aves(z))} \leq \widetilde K_j.
	\end{equation}
	Moreover, the following quantities are uniformly bounded: $RR_{xxz}$, $R^2R_{xxxz}$, $R_{xzz}$, $RR_{xxzz}$ and $R^2R_{xxxzz}$.
\end{prop}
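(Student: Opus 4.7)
My plan is to reduce everything to a scalar nonlinear PDE satisfied by $R(t,x;\aves)$ and then propagate bounds from the terminal datum backward in time, first for pure $x$-derivatives and then for mixed $(x,\aves)$-derivatives. The chain rule together with the smoothness of $\aves(z)$ will convert the latter into the claimed $z$-estimates. Throughout I would follow the scheme used in \cite[Section~2.1]{FoHu:16} for the single-scale case and adapt it to the averaged parameter $\aves(z)$.

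The starting point is to derive the PDE for $R$. Differentiating the Merton PDE \eqref{p1_eq_value} in $x$ and using $R=-M_x/M_{xx}$, a short computation gives
\begin{equation*}
R_t + \tfrac{1}{2}\aves^2 R^2 R_{xx} = 0, \qquad R(T,x;\aves) = -U'(x)/U''(x),
\end{equation*}
which admits the Feynman--Kac representation $R(t,x;\aves) = \EE[R(X_T^{t,x})]$ under the fictitious diffusion $dX_s = \aves R(s,X_s;\aves)\,dW_s$. This diffusion stays in $\RR^+$ because $R(s,0;\aves)=0$ by Assumption~\ref{assump_U}\eqref{p1_assump_Urisktolerance}.

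Next, a Leibniz-rule induction on $j$ shows that $\partial_x^{j}R^{j}$ is a polynomial in $R,R_x,\ldots,\partial_x^{j}R$ whose leading term is a scalar multiple of $R^{j-1}\partial_x^{j} R$, so uniform bounds on $\{R^{k-1}\partial_x^{k}R\}_{k\leq m}$ and on $\{\partial_x^{k}R^{k}\}_{k\leq m}$ are mutually equivalent; this gives the equivalence of (i) and (ii). It then suffices to establish (ii) by induction on $j$. Differentiating the $R$-PDE $j$ times in $x$ yields a linear parabolic equation for $u_j:=\partial_x^{j}R^{j}$ whose coefficients and source term are polynomials in $u_1,\ldots,u_{j-1}$, hence uniformly bounded by the inductive hypothesis. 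Since $u_j(T,\cdot)$ is bounded uniformly (for $2\leq j\leq 7$ by Assumption~\ref{assump_U}\eqref{p1_assump_Urisktolerance}, and for $j=1$ by the clause $R'(x)<\infty$ on $[0,\infty)$ read uniformly), the parabolic maximum principle --- or Feynman--Kac applied to the auxiliary diffusion above --- propagates the bound to $t<T$ uniformly in $\aves$.

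For the mixed $(x,z)$-estimates, I would treat $\aves$ as a parameter independent of $z$ and differentiate the $R$-PDE in $\aves$. Setting $\widetilde R:=\partial_{\aves}R$, this yields a linear parabolic equation for $\widetilde R$ whose coefficients are built from $R,R_x,R_{xx}$ (all bounded by step~2) and whose terminal datum is bounded thanks to the $C^9$-regularity of $U$ in Assumption~\ref{assump_U}\eqref{p1_assump_Uregularity}. Repeating the induction of step~2 on the linear PDEs for $\partial_x^{j}\widetilde R$ and $\partial_x^{j}\partial_{\aves}^{2}R$ (indices $j\leq 3$ suffice) gives uniform bounds on $R^{j-1}\partial_x^{j}\widetilde R$ and $R^{j-1}\partial_x^{j}\partial_{\aves}^{2}R$. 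Finally, the chain rule $R_z=\widetilde R\,\aves'(z)$ and $R_{zz}=\partial_{\aves}^{2}R\,(\aves'(z))^{2}+\widetilde R\,\aves''(z)$, combined with the polynomial growth of $\aves'$ and $\aves''$ inherited from Assumption~\ref{p2_assump_valuefuncadd}\eqref{p2_assump_valuefuncSZadd}, delivers the uniform bounds on $RR_{xxz}$, $R^{2}R_{xxxz}$, $R_{xzz}$, $RR_{xxzz}$, and $R^{2}R_{xxxzz}$.

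The main obstacle will be justifying the Feynman--Kac step at each inductive level, since the diffusion coefficient $\aves R$ degenerates at $x=0$. One needs the base-case bound $|R_x|\leq\widetilde K_1$ to guarantee Lipschitz dependence of the coefficient, non-explosion of the flow $X$, and the legitimacy of differentiating under the expectation, before the higher-order inductive step can be bootstrapped. Establishing this base case independently of the induction, as done in the single-scale analysis of \cite{FoHu:16}, is therefore the delicate point of the argument.
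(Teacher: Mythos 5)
Your overall strategy—linearize the problem and propagate the terminal bounds of Assumption~\ref{assump_U}\eqref{p1_assump_Urisktolerance} backward by a maximum/comparison principle—is the right spirit, and your reduction of the $z$-estimates to $\aves$-derivatives via the chain rule is a legitimate alternative to the paper's route (the paper instead differentiates the Vega--Gamma relation \eqref{p2_eq_vegagamma} and quotes Propositions~3.5--3.7 of [FoHu:16]). But the core induction in your step~2 has a genuine structural gap: it does not close. Differentiating $R_t+\tfrac12\aves^2R^2R_{xx}=0$ once in $x$ gives
\begin{equation}
\partial_t R_x+\tfrac12\aves^2R^2\,\partial_x^2(R_x)+\aves^2\left(RR_{xx}\right)R_x=0,
\end{equation}
so the zeroth-order coefficient of the linear equation for the $j=0$ quantity $R_x$ is precisely the $j=1$ quantity $RR_{xx}$; the maximum principle for $R_x$ already requires the bound you are trying to prove at the next level. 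More generally, Leibniz applied to $\partial_x^j(R^2R_{xx})$ produces coefficients and source terms containing $\partial_x^{j+1}R$ and unweighted products such as $\partial_x^k(R^2)\,\partial_x^{j+2-k}R$; since only the \emph{weighted} combinations $R^{m-1}\partial_x^mR$ are bounded (the bare derivatives $\partial_x^mR$ are not), your claim that the coefficients are ``polynomials in $u_1,\dots,u_{j-1}$, hence uniformly bounded by the inductive hypothesis'' is false. The difficulty you flag (degeneracy of $\aves R$ at $x=0$) is real but secondary to this one.

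The way the paper (via [FoHu:16, Proposition~3.5], extended to $j=5,6$) closes the argument is by a genuine linearization rather than by differentiating the fast-diffusion equation: one represents $R(t,x;\aves)=h_x\bigl(t,h^{(-1)}(t,x)\bigr)$, where $h$ solves a \emph{linear} heat equation whose terminal datum is built from $I=(U')^{-1}$ (this is where Assumption~\ref{assump_U}\eqref{p1_assump_Ugrowth} enters). Then $R^j\partial_x^{j+1}R$ is a universal polynomial in the ratios $\partial_x^kh/\partial_xh$, and each such ratio is bounded by the comparison principle because $\partial_x^kh\pm K\partial_xh$ solve the same heat equation and have signed terminal data by Assumption~\ref{assump_U}\eqref{p1_assump_Urisktolerance}. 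If you want to keep your framework, you would need to replace your induction by this representation (or find another mechanism that decouples the levels); as written, the chain of linear parabolic equations cannot be anchored. A minor additional point: for $j=1$ the terminal bound is not simply ``$R'(x)<\infty$ read uniformly''---Assumption~\ref{assump_U}\eqref{p1_assump_Urisktolerance} only asserts pointwise finiteness, and the uniform bound on $R_x$ is itself part of what must be proved.
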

\begin{proof}
	The first part extends results of \cite[Proposition~3.5]{FoHu:16}, and the proof is essentially repeating the argument therein  for the case $j = 5$ and 6 by using the comparison principle of heat equations. The proof of the second part consists of successively differentiating the ``Vega-Gamma'' relation in \eqref{p2_eq_vegagamma}, and repeatedly using the concavity of $\vz$, the results in the first part, and Propositions~3,5, 3.6 and 3.7 in \cite{FoHu:16} with $\lambda$ or $\lambda(z)$ replaced by $\aves(z)$. For the sake of simplicity, we omit this lengthy, tedious but straightforward derivation.
\end{proof}

\section{$\pz$-Portfolio performance under multiscale regime}\label{p2_sec_multipz}
	Recalling  the strategy $\pz$ defined in terms of model parameters and the zeroth order term $\vz(t,x,z)$ in \eqref{p2_eq_pzfull}:
	\begin{equation}
	\pz = -\frac{\lambda(y,z)}{\sigma(y,z)}\frac{\vz_x(t,x,z)}{\vz_{xx}(t,x,z)} := \frac{\lambda(y,z)}{\sigma(y,z)}R(t,x;\aves(z)),
	\end{equation}
	and assuming $\pz$ is admissible, we shall give its performance in this section. More precisely, let $X^\pz$  be the wealth process following $\pz$
	\begin{equation}
	\ud X_t^\pz = \mu(Y_t,Z_t)\pz(t,X_t^\pz,Y_t,Z_t)\ud t + \sigma(Y_t,Z_t)\pz(t,X_t^\pz,Y_t,Z_t) \ud W_t,
	\end{equation}
	then we aim at finding the rigorous approximation of the associated value function:
	\begin{equation}\label{p2_def_Vfl}
	\Vfl := \EE\left\{U(X_T^\pz)\Big\vert X_t^\pz = x, Y_t = y, Z_t = z\right\},
	\end{equation}
	with the general utility $U$ satisfying Assumption~\ref{assump_U}. The estimation result regarding $\Vfl$ has been summarized in Theorem~\ref{p2_Thm_full}, and we present the proof in the next subsection.

\subsection{Proof of Theorem~\ref{p2_Thm_full}}

The proof is split into two steps: firstly, we propose the expansion form of $V^{\pz, \eps, \delta}$: 
\begin{equation}\label{eq_Vfl}
V^{\pz, \eps, \delta} = \vzl + \sqrt{\eps}\vozl + \sqrt{\delta}\vzol + \cdots,
\end{equation}
and identifying $\vzl$, $\vozl$ and $\vzol$ properly. To this end, we write down the PDE satisfied by $V^{\pz, \eps, \delta}$, perform regular perturbations in the slow parameter $\delta$, and then singular perturbations in the fast parameter $\eps$. Note that this technique has been used in the linear pricing problem with two factor models, for instance, see \cite[Section 4]{FoPaSiSo:11}. 
Secondly, we justify the ``$\cdots$'' part in \eqref{eq_Vfl} is of order $\MCO(\eps + \delta)$, to complete the proof. 

\medskip
\noindent\textbf{Step1: Heuristic derivations.}	By the martingale property, we note that $\Vfl$ satisfies the following linear PDE
		\begin{align}\label{p2_eq_Vfl}
		\left(\MCL_2 +\frac{1}{\sqrt{\eps}}\MCL_1 +  \frac{1}{\eps}\MCL_0 + \delta \MCM_2 + \sqrt{\delta}\MCM_1 + \frac{\sqrt\delta}{\sqrt{\eps}}\MCM_3\right)\Vfl &=0,  \\
		\Vfl(T,x,y,z) &= U(x),
		\end{align}
		where the operators $\MCL_i$ and $\MCM_i$ are defined by:
		\begin{align}
		\MCL_0 &= b(y)\partial_y + \half a^2(y)\partial^2_{y}, &\MCL_1 &= \rho_1a(y)\sigma(y,z)\pz\partial_x\partial_y,&\\
		\MCL_2 &= \partial_t + \mu(y,z)\pz\partial_x + \half\sigma^2(y,z)\left(\pz\right)^2\partial^2_{x}, &\MCM_1 &= \rho_2\sigma(y,z)g(z)\pz\partial_x\partial_z,&\\
		\MCM_2 &= c(z)\partial_z + \half g^2(z)\partial^2_z, &\MCM_3 &= \rho_{12}a(y)g(z)\partial_y\partial_z.&
		\end{align}
		
		The strategy is to expand the value function first in the slow parameter $\delta$:
		\begin{equation}\label{p2_eq_Vflexpansion}
		\Vfl = \Vflz + \sqrt{\delta}\Vflo + \cdots,
		\end{equation}
		and identify the effective equations for $\Vflz$, $\Vflo$, 
		and then to expand $\Vflz$ and $\Vflo$ in the fast parameter $\eps$
		\begin{align*}
		&\Vflz = \vzl + \sqrt{\eps}\vozl + \eps\vtzl + \eps^{3/2}\vthzl + \cdots,\\
		&\Vflo = \vzol + \sqrt{\eps}\vool +\eps\vtol  \cdots.
		\end{align*}
		Again the superscript $(i,j)$ of $v^\pz$ indicates the power in $\sqrt{\eps}$ and $\sqrt{\delta}$ respectively, and $(0,0)$ is reduced to $(0)$ for being consistent with the notations in \cite{FoSiZa:13}.
		By letting $\delta = 0$, we deduce
		\begin{equation}
		\left(\MCL_2 + \frac{1}{\sqrt{\eps}}\MCL_1 + \frac{1}{\eps}\MCL_0 \right)\Vflz = 0, \quad \Vflz(T,x,y,z) = U(x).
		\end{equation}
		This is actually equation (17) in \cite{Hu:18} expect that $\lambda(y)$ is replaced by $\lambda(y,z)$ to take the slow factor $Z_t$ into consideration. However, $z$ can be viewed as a parameter in $\Vflz$, as there is no $z$-derivatives in the above PDE. Consequently,  the derivation and reasoning in \cite[Section~III.A]{Hu:18} can be applied, and we deduce
		\begin{align}\label{p2_eq_vzl}
		&\vzl = \vz = M(t,x,\aves(z)),\\  \label{p2_eq_vtzl}
		& \vtzl = -\half\theta(y,z)D_1\vz + C_1(t,x,z)\\ \label{p2_eq_vozl}
		&\vozl = \voz = -\half(T-t)\rho_1B(z)D_1^2\vz,\\  \label{p2_eq_vthzl}
		& \vthzl =  \frac{1}{2}(T-t)\theta(y,z)\rho_1B(z)\left(\frac{1}{2}D_2 + D_1\right)D_1^2\vz + \frac{1}{2} \rho_1\theta_1(y,z)D_1^2\vz + C_2(t,x,z),
		\end{align}
		where $\theta_1(y)$ is the solution to the ODE:
		\begin{equation}
		\MCL_0\theta_1(y) = \lambda(y,z)a(y)\partial_y\theta(y,z) - \average{\lambda(\cdot,z)a(\cdot)\partial_y\theta(\cdot,z)},
		\end{equation}
		and $C_i(t,x,z)$ are some constant of integration in $y$, that may depend on $(t,x,z)$, for $i = 1, 2$.
		
		Next, we go back to equation \eqref{p2_eq_Vfl} and derive the PDE for $\Vflo$ by collecting terms of order $\sqrt{\delta}$,
		\begin{equation}
		\left(\MCL_2 + \frac{1}{\sqrt{\eps}}\MCL_1 + \frac{1}{\eps}\MCL_0\right)\Vflo + \left(\MCM_1 + \frac{1}{\sqrt{\eps}}\MCM_3\right)\Vflz = 0, \quad \Vflo(T,x,y,z) = 0.
		\end{equation}
		Observing that $\MCM_3$ takes derivatives in $y$, and in the expansion of $\Vfl$, the first two terms $\vzl$ and $\vozl$ are independent of $y$ (cf. \eqref{p2_eq_vzl} and \eqref{p2_eq_vozl}), one has
		\begin{equation}
		\frac{1}{\sqrt{\eps}}\MCM_3 \Vfl = \sqrt{\eps} \vtzl + \eps \vthzl + \cdots. 
		\end{equation}
		Now, by collecting $\eps^{-1}$ terms and $\frac{1}{\sqrt{\eps}}$ terms and noticing that there is no $y$-derivative in the equations, we could make the choices that $\vzol = \vzol(t,x,z)$ and $\vool = \vool(t,x,z)$, i.e. they are independent of $y$. Collecting terms of order one forms a Poisson equation for $\vtol$,
		\begin{equation}\label{p2_eq_vtolvzol}
		\MCL_0\vtol + \MCL_2\vzol + \MCM_1\vzl = 0, \quad \vzol(T,x,z) = 0.
		\end{equation}
		and yields the following solvability condition for $\vzol$
		\begin{align}
		&\vzol_t + \half\aves^2(z)\left(\frac{\vz_x}{\vz_{xx}}\right)^2\vzol_{xx} - \aves^2(z)\frac{\vz_x}{\vz_{xx}}\vzol_x+\rho_2\widehat{\lambda}(z)g(z)R\vz_{xz} = 0,\\ 
		&\vzol(T,x,z) = 0.
		\end{align}
		where we have used the relation $\vzl = \vz$ (cf. \eqref{p2_eq_vzl}) . This is exactly equation \eqref{p2_eq_vzof}, by its uniqueness,  we obtain
		\begin{align}\label{p2_eq_vzol}
		\vzol = \vzo &= \half(T-t)\rho_2\widehat{\lambda}(z)g(z)D_1\vz_z(t,x,z) \\ &=\half(T-t)^2\rho_2\widehat{\lambda}(z)\aves(z)\aves'(z)g(z)D_1^2\vz(t,x,z).
		\end{align}
		Plugging it back to equation \eqref{p2_eq_vtolvzol}, and solving for $\vtol$ gives
		\begin{equation}\label{p2_eq_vtoltmp}
		\vtol = -\theta(y,z)\left(\half D_2 + D_1\right)\vzo - \theta_2(y,z)\rho_2g(z)D_1\vz_z + C_3(t,x,z),
		\end{equation}
		where $\theta_2(y,z)$ is the solution to the ODE
		\begin{equation}
		\MCL_0 \theta_2(y,z) = \lambda(y,z) - \widehat \lambda(z),
		\end{equation}
		and $C_3(t,x,z)$ is some `constant' in $y$. 
		To further express $\vtol$ in terms of $\vz$ solely, we use the expression \eqref{p2_eq_vzo} of $\vzo$ and obtain
		\begin{align}\label{p2_eq_vtol}
		\vtol &= -\theta(y,z)\half(T-t)^2\rho_2\widehat{\lambda}\aves\aves'g\left(\half D_2 + D_1\right)D_1^2\vz \\
		&\quad - \theta_2(y,z)\rho_2g(z)(T-t)\aves\aves'D_1^2\vz + C_3(t,x,z).
		\end{align}

Till now, desired terms are all identified including $\vzl, \vtzl, \vozl, \vthzl, \vzol$ and $\vtol$, and we will move on to the justification of the above derivation.

\medskip

\noindent\textbf{Step2: Expansion justification.}
		To validate the above formal derivation, we at least need to show the residual function $E(t,x,y,z)$ is of order higher than $\sqrt\eps + \sqrt\delta)$. To this end, we first analyze an auxiliary residual function $\widetilde E(t,x,y,z)$ defined by
		\begin{equation}
		\widetilde E(t,x,y,z)= \Vfl - \vz - \sqrt{\eps}\voz - \eps \vtzl - \eps^{3/2}\vthzl - \sqrt{\delta}\vzo - \eps\sqrt{\delta} \vtol,
		\end{equation}
		with these functions given in \eqref{p2_eq_vzl}, \eqref{p2_eq_vtzl}, \eqref{p2_eq_vozl}, \eqref{p2_eq_vthzl}, \eqref{p2_eq_vzol} and \eqref{p2_eq_vtol}, respectively. We take $C_i(t,x,z) \equiv 0$, $i = 1, 2, 3$ in the relevant terms. Straight forward computation gives
		\begin{align*}
		&\left(\MCL_2 +\frac{1}{\sqrt{\eps}}\MCL_1 +  \frac{1}{\eps}\MCL_0 + \delta \MCM_2 + \sqrt{\delta}\MCM_1 + \frac{\sqrt\delta}{\sqrt{\eps}}\MCM_3\right)\widetilde E \\
		&\qquad\quad + \MCL_2\left(\eps\vtzl + \eps^{3/2}\vthzl + \eps\sqrt{\delta}\vtol\right) + \MCL_1\left(\eps\vthzl + \sqrt{\eps\delta}\vtol\right) \\
		&\qquad\quad + \sqrt{\delta}\MCM_3\left(\sqrt{\eps}\vtzl + \eps\vthzl + \sqrt{\eps\delta}\vtol\right)\\
		&\qquad\quad + \delta\MCM_2\left(\vz + \sqrt{\eps}\voz + \eps \vtzl + \eps^{3/2}\vthzl + \sqrt{\delta}\vzo + \eps\sqrt{\delta} \vtol\right)\\
		&\qquad\quad + \sqrt{\delta}\MCM_1\left(\sqrt{\eps}\voz + \eps\vtzl + \eps^{3/2}\vthzl + \sqrt{\delta}\vzo + \eps\sqrt{\delta}\vtol\right) = 0, \\
		& \widetilde E(T,x,y,z) = -\eps\vtzl(T,x,y,z) - \eps^{3/2}\vthzl(T,x,y,z).
		\end{align*}
		
		Noticing that $\MCL_2 +\frac{1}{\sqrt{\eps}}\MCL_1 +  \frac{1}{\eps}\MCL_0 + \delta \MCM_2 + \sqrt{\delta}\MCM_1 + \frac{\sqrt\delta}{\sqrt{\eps}}\MCM_3$ is the infinitesimal generator of the processes $(X_t^\pz, Y_t, Z_t)$, and using the bound estimates which we will show next, we have the following Feynman--Kac formula:
		\begin{align}
		\widetilde E(t,x,y,z) &= \eps \EE_{(t,x,y,z)}\left[\int_t^T \MCR^1 (s,X_s^\pz, Y_s,Z_s) \ud s \right] +  \delta \EE_{(t,x,y,z)}\left[\int_t^T \MCR^2  (s,X_s^\pz, Y_s,Z_s) \ud s \right] \nonumber\\
		& \quad + \sqrt{ \eps\delta} \EE_{(t,x,y,z)}\left[\int_t^T \MCR^3  (s,X_s^\pz, Y_s, Z_s) \ud s \right]  - \eps\EE_{(t,x,y,z)}\left[\vtzl(T,X_T^\pz, Y_T,Z_T) \right] \nonumber\\
		& \quad - \eps^{3/2}\EE_{(t,x,y,z)}\left[\vthzl(T,X_T^\pz,Y_T,Z_T)\right],\label{p2_eq_Eflprob}
		\end{align}
		and obtain the desired results for $\widetilde E \sim \MCO(\eps + \delta)$, where $\MCR^i$ are defined as:
		\begin{align}
		\MCR^1& = \MCL_2\left(\vtzl + \sqrt{\eps}\vthzl + \sqrt{\delta}\vtol\right)+\MCL_1\vthzl + \sqrt{\delta}\MCM_1\vtzl\\
		\MCR^2& = \MCM_2\left(\vz + \sqrt{\eps}\voz + \eps \vtzl + \eps^{3/2}\vthzl + \sqrt{\delta}\vzo + \eps\sqrt{\delta} \vtol\right) + \MCM_1\vzo\\
		\MCR^3 &= \MCL_1\vtol + \MCM_3\left(\vtzl + \sqrt{\eps}\vthzl + \sqrt{\delta}\vtol\right) \\
		& \quad + \MCM_1\left(\voz + \eps\vthzl  + \sqrt{\eps\delta}\vtol\right).
		\end{align}
	
		We now present the bound estimates of the expectations in \eqref{p2_eq_Eflprob}.
		Straightforward computation shows that each expectation term $\EE_{(t,x,y,z)}\left[\int_t^T \MCR^i_s\ud s \right]$ is a sum of  integrals of the following form:
		\begin{equation}\label{eq_polydvzform}
		\EE_{(t,x,y,z)}\left[\int_t^T h(Y_s,Z_s)\;\MCD\vz(s,X_s^\pz,Z_s)\ud s \right],
		\end{equation}
		where $h(y,z)$ is mostly polynomially growing, and $\MCD\vz$ takes derivatives of $\vz$. According to different operators, the derivatives are
		\begin{align}\label{eq_derivativeLi}
		&\MCL_i: D_1^2\vz, D_2D_1\vz, D_1^3\vz, D_2D_1^2\vz,D_1^4\vz,D_2D_1^3\vz,D_1D_2D_1^2\vz, D_2^2D_1^2\vz\\
		&\MCM_3: \partial_z D_1\vz, \partial_z D_1^2\vz, \partial_z D_1^3\vz, \partial_z D_2D_1^2\vz, \\
		&\MCM_2: \partial_z\vz, \partial_z^2\vz, \partial_z^2D_1\vz, \partial_z^2D_1^2\vz, \partial_z^2D_1^3\vz, \partial_z^2D_2D_1^2\vz, \text{ plus all terms in } \MCM_3\\
		&\MCM_1: D_1\partial_zD_1\vz, D_1\partial_zD_1^2\vz,  D_1\partial_zD_1^3\vz,  D_1\partial_zD_2D_1^2\vz.\label{eq_derivativeM1}
		\end{align}
	   A repeated use of the concavity of $\vz$, Propositions~3.7 in \cite{FoHu:16} and Proposition~\ref{p2_prop_risktoleranceadd} guarantees that $\MCD\vz$ is bounded by a multiple in $z$ of $\vz$, namely, for any $\MCD\vz$ taking the above form, we have
		\begin{equation}\label{eq_kvzform}
		\abs{\MCD\vz(t,x,z)} \leq k(z)\vz(t,x,z), 
		\end{equation}
		for some non-negative and at most polynomially growing function $k(z)$.  
		
		For clarity, we consider the term $\MCL_2 \vtzl$ in $\MCR^1$ to illustrate the  above procedure. By definition, one has
				\begin{align*}
				\MCL_2 \vtl &= \left(\partial_t + \lambda^2(y,z) R(t,x;\aves(z))\partial_x + \half \lambda^2(y,z)R^2(t,x;\aves(z)) \partial_x^2\right)(-\half \theta(y,z) D_1 \vz)\\
				& = -\half \theta(y,z) \left((\lambda^2(y,z) - \aves^2(z))R(t,x;\aves(z))\partial_x + \half (\lambda^2(y,z) - \aves^2(z))R^2(t,x;\aves(z))\partial_x^2\right)D_1\vz \\
				& = -\half\theta(y,z) (\lambda^2(y,z) - \aves^2(z))(D_1^2 \vz + \half D_2D_1\vz),
				\end{align*}
				where we have used the relation $\Ltx(\aves(z))D_1 \vz = D_1\Ltx(\aves(z))\vz = 0$.
				Under Assumption~\ref{p2_assump_valuefuncadd}, the function $\theta(y,z)(\lambda^2(y,z) - \aves^2(z))$ is at most polynomially growing, thus $\MCL_2\vtl$ is of the form \eqref{eq_polydvzform}. The other terms in $\MCR^i$, for $i = 1, 2, 3$, follow by a similar reasoning, thus are omitted. To illustrate that all derivatives in \eqref{eq_derivativeLi}--\eqref{eq_derivativeM1} can be bounded as in \eqref{eq_kvzform}, we consider $\partial_zD_1^2\vz$ as an example:
		\begin{align*}
		\abs{\partial_z D_1^2\vz} = \abs{\partial_z R(R_x -1)\vz} \leq \abs{R_z(R_x-1)\vz_x} + \abs{R R_{xz}\vz_x} + \abs{R(R_x-1)\vz_{xz}},
		\end{align*}
		where we have omitted the arguments $(t,x; \aves(z))$ for $R$, and $(t,x, z)$ for $\vz$.
		Proposition~\ref{p2_prop_risktoleranceadd} gives $\abs{R_x} \leq K_0$ and $\abs{R} \leq K_0x$ since $R$ is strictly increasing \cite[Proposition~3.4]{FoHu:16}. Following Proposition~3.7 in \cite{FoHu:16}, the $z$-derivatives are bounded by mostly polynomial multiples of themselves, i.e. $\abs{R_z} \leq \widetilde d_{01}(z) R$, $\abs{R_xz} \leq \widetilde d_{11}$ and $\abs{\vz_{xz}} \leq d_{11}(z)\vz_x$ with positive mostly polynomially growing $\widetilde d_{01}$, $\widetilde d_{11}$ and $d_{11}$. Thus, the above inequality is bounded by $d(z)x\vz_x$, which is then bounded by $d(z)\vz$ using the concavity of $\vz$.

		Then, one can use the Cauchy-Schwartz inequality to separate the functions depending only on $(y,z)$ from $\vz(s, X_s^\pz, Z_s)$ in the integral, i.e., it is reduced to
		\begin{equation}
		\left(\EE_{(t,y,z)} \int_t^T h^2(Y_s,Z_s)k^2(Z_s)\ud s \right)^{1/2}\left(\EE_{(t,x,y,z)}\int_t^T\vz(s,X_s^\pz,Z_s)^2\ud s \right)^{1/2}.
		\end{equation}
		Assumptions on $(Y_t, Z_t)$ ensure that the first part is uniformly bounded in $(\eps, \delta)$, while for the second part follows from Lemma~\ref{p2_lem_unibddadd}.
	 Similarly, the last two terms in \eqref{p2_eq_Eflprob} are bounded by repeating the above procedure using assumptions on the utility (cf. Assumption \ref{assump_U} equation \eqref{p1_assump_Uiii}). 
		
		
		So far, we have shown for any $(t,x,y,z) \in [0,T]\times \RR^+\times \RR \times \RR$
		\begin{equation}
		\abs{\widetilde E(t,x,y,z)} \leq \widetilde C\left(\eps + \delta + \sqrt{\eps\delta}\right) \leq \widetilde C(\delta + \eps),
		\end{equation}
		where $\widetilde C$ may varying from line to line and is free of $(\eps, \delta)$.  By the difference between $E$ and $\widetilde E$, one has
		\begin{multline}
		\abs{\Vfl - \vz - \sqrt{\eps}\voz - \sqrt{\delta}\vzo} \\  \leq \abs{\widetilde E} + \abs{\eps\vtzl + \eps^{3/2}\vthzl + \eps\sqrt{\delta}\vtol} \leq C(\eps+\delta),
		\end{multline}
		where $C = C(t,x,y,z)$ and is independent of small parameters $(\eps, \delta)$. Thus we obtain the desired result.

\section{The Asymptotic Optimality of $\pz$}\label{p6_sec_optimality}

This section focuses on the proof of Theorem~\ref{p6_thm_main}, dedicated to the performance of $\pz$ by comparison with other admissible strategy of the form
\begin{equation}
\pitilde = \pzt + \eps^\alpha \pozt + \delta^\beta\pzot.
\end{equation}
Detailed assumptions on any fixed choices of $\pzt, \pozt$ and $\pzot$ are given in Assumptions~\ref{p6_assump_piregularity} and \ref{p6_assump_optimality}.
Recall the corresponding value process $\Vft_t$ defined in \eqref{def_Vft}:
		\begin{equation}\label{p6_def_Vfpi}
		\Vft_t = \EE\{U(X_T^\pi)\vert \MCF_t\},
		\end{equation}
		with $\pi  = \pitilde $, and $X_t^\pi$ given by
		\begin{equation}\label{p6_def_Xtpi}
		\ud X_t = \pitilde_t\mu(Y_t, Z_t) \ud t + \pitilde_t \sigma(Y_t, Z_t) \ud W_t.
		\end{equation}
We would like to compare asymptotically the performance of $\pz$ with the one of  $ \pitilde $ by looking at the approximations of $\Vfl$ and $\Vft$. For $\Vfl$,  the rigorous result has been derived in Theorem~\ref{p2_Thm_full}. Thus, it remains to find approximations associated to $ \pitilde$ at a desired order.
Note that \eqref{p6_def_Vfpi} is a process rather than a function of $(t,x,y,z)$ as we do not restrict ourself to work with Markovian strategies. This is also explicitly stated in the following.
	\begin{assump}\label{p6_assump_piregularity}
		For a fixed choice of $(\pzt$, $\pozt$, $\pzot$, $\alpha$, $\beta)$, we require:
		\begin{enumerate}[(i)]
			\item The processes $\pzt_t$, $\pozt_t$ and $\pzot_t$ are adapted to the filtration $\MCF_t$ generated by the three Brownian motions $(W_t, W_t^Y, W_t^Z)$.
			\item The strategy $\pitilde = \pzt + \eps^\alpha\pozt +  \delta^\beta \pzot$ is admissible.
			\item The function $\mu(y,z)$ is at most polynomially growing.
			\item The process $\vz(t,X_t^\pi, Z_t) = M(t,X_t^\pi;\aves(Z_t))$ is in $L^4([0,T]\times \Omega)$ uniformly in $(\eps, \delta)$, i.e.,
			\begin{equation}
			\EE\left[\int_0^T \left(\vz(s,X_s^\pi,Z_s)\right)^4 \ud s\right] \leq C_2,
			\end{equation}
			where $C_2$ is independent of $(\eps,\delta)$, $Z_t$ follows \eqref{p2_eq_Ztfull}, and $X_t^\pi$ follows \eqref{p6_def_Xtpi}.
			%
			%
		\end{enumerate}
	\end{assump}

The above assumptions are mainly to ensure that $\Vft$ is well-defined, and heuristic expansions can be obtained. Once this is done, additional technical  integrability conditions on $\pitilde$ are needed, to rigorize the derivation. In order not to cut the presentation flow, we shall list them in Appendix~\ref{p6_appendix_addasump}.
	
	The first attempt of finding the approximation of $\Vft$ is to use the PDE approach, as in the case of $\Vfl$. In order to do so, we indeed need to restrict $\pitilde$ to feedback strategies, that is, $\pzt_t$, $\pzot_t$ and $\pzot_t$ are functions of $(t, X_t^\pi, Y_t, Z_t)$. Consequently $\Vft_t$ becomes a function of $(t,x,y,z)$, and can be characterized by a PDE. 
	Let $\MCL$ be the infinitesimal generator of the state processes $(X_t^\pi, Y_t, Z_t)$ with $X_t^\pi$ defined in \eqref{p6_def_Xtpi}, by definition $\Vft$ satisfies:
	\begin{equation}
	\partial_t \Vft + \MCL \Vft = 0, \quad \Vft(T,x,y,z) = U(x).
	\end{equation}
	According to the powers of $\eps$ and $\delta$, one can rewrite the generator $\MCL$ as:
	\begin{align}
	0 = \partial_t + \MCL &= \frac{1}{\eps}\MCL_0+ \delta \MCM_2 + \frac{\sqrt{\delta}}{\sqrt{\eps}}\MCM_3 + \frac{1}{\sqrt{\eps}}\widetilde \MCL_1 + \widetilde \MCL_2 +  \eps^\alpha\widetilde \MCL_3 + \eps^{2\alpha} \widetilde \MCL_4 + \eps^{\alpha-1/2}  \widetilde \MCL_5  \nonumber \\
	&\quad +\sqrt\delta\widetilde \MCM_1 + \delta^\beta\widetilde \MCM_4 + \delta^{2\beta}\widetilde \MCM_5 + \eps^\alpha\delta^\beta\widetilde \MCM_6 + \frac{\delta^\beta}{\sqrt{\eps}}\widetilde \MCM_7 + \eps^\alpha\sqrt\delta\widetilde \MCM_8 + \delta^{\beta + 1/2}\widetilde \MCM_9, \label{eq_Vft}
	\end{align}
	where the operators are defined as (arguments of $\pzt, \pozt, \pzot$ are systematically omitted for brevity):
	\begin{align}
	\widetilde \MCL_1 &= \rho_1a(y)\sigma(y,z)\pzt \partial_x\partial_y, &\widetilde \MCL_2 &= \partial_t + \mu(y,z)\pzt\partial_x +\half\sigma^2(y,z)\left(\pzt\right)^2\partial_x^2,&\\
	\widetilde \MCL_3 &= \mu(y,z)\pozt\partial_x +  \sigma^2(y,z)\pzt\pozt\partial_x^2, &\widetilde \MCL_4 &= \half\sigma^2(y,z)\left(\pozt\right)^2\partial_x^2,&\\
	\widetilde \MCL_5 &= \rho_1a(y)\sigma(y,z)\pozt\partial_x\partial_y, &\widetilde \MCM_1 &= \rho_2\sigma(y,z)g(z)\pzt\partial_x\partial_z,&\\
	\widetilde \MCM_4 &= \mu(y,z)\pzot\partial_x + \sigma^2(y,z)\pzt\pzot\partial_x^2, &\widetilde \MCM_5 &= \half\sigma^2(y,z)\left(\pzot\right)^2\partial_x^2,&\\
	\widetilde \MCM_6 &= \sigma^2(y,z)\pozt\pzot\partial_x^2, &\widetilde \MCM_7 &= \rho_1a(y)\sigma(y,z)\pzot\partial_x\partial_y,&\\
	\widetilde \MCM_8 &= \rho_2\sigma(y,z)g(z)\pozt\partial_x\partial_z, &\widetilde \MCM_9 &= \rho_2\sigma(y,z)g(z)\pzot\partial_x\partial_z.&
	\end{align}
	Observing that four scales $\eps^\alpha, \sqrt{\eps}, \delta^\beta, \sqrt{\delta}$ exist in \eqref{eq_Vft}, we propose the following ansatz
	\begin{equation}\label{p6_eq_Vftansatz}
	\Vft = \vzt + \sum_{i+j = 1}^{n+1} \eps^{i\alpha}\delta^{j\beta}\widetilde v^{(i\alpha,j\beta)} + \sqrt{\eps}\vozt + \sqrt\delta\vzot + \cdots,
	\end{equation}
	where $n = \max(n_1,n_2)$ and $n_1$ (\emph{resp.} $n_2$) is the largest integer satisfying $n_1\alpha < \half$ (\emph{resp.} $n_2\beta < \half$). If one follows the derivation in our previous work \cite[Section~4]{FoHu:16} where only the slow factor is considered, after having the ansatz, the next step is to identify the needed terms before $o (\sqrt{\eps} + \sqrt{\delta})$ in \eqref{p6_eq_Vftansatz} and justify the expansion. While doing this, keep in mind that we need to compare it to the approximation of $\Vfl$, which is $\vz + \sqrt{\eps}\voz + \sqrt{\delta}\vzo + \MCO(\eps + \delta)$. In some cases, the comparison is difficult. Even for the cases that the comparison can be done, this process is lengthy and tedious by matching terms of all different orders. For instance, when $\alpha = \frac{1}{8}$ and $\beta = \frac{3}{8}$, the terms clearly before $o(\sqrt{\eps} + \sqrt{\delta})$ in \eqref{p6_eq_Vftansatz} are
	\begin{equation}
	\vzt + \eps^\alpha \widetilde v^{1\alpha,0\beta} + \delta^\beta \widetilde v^{0\alpha,1\beta} + 
	\eps^{2\alpha}\widetilde v^{2\alpha,0\beta} + \eps^\alpha\delta^\beta \widetilde v^{\alpha, \beta}  + \eps^{3\alpha}\widetilde v^{3\alpha,0\beta} +\sqrt{\eps} \vozt +  \sqrt{\delta} \vzot,
	\end{equation}
	where $\widetilde v^{1\alpha,0\beta}$ and $\widetilde v^{0\alpha,1\beta}$ are identified zeros, and other terms can be characterized by effective equations. But it is impossible to compare every single terms above to $\vz + \sqrt{\eps}\voz + \sqrt{\delta}\vzo$, the first order approximation of $\Vfl$. However, if one add the term $\delta^{2\beta} \widetilde v^{0\alpha, 2\beta}$ to the above expression (although itself is $o(\sqrt{\eps} + \sqrt{\delta})$), and regroup it with $\eps^{2\alpha}\widetilde v^{2\alpha,0\beta} + \eps^\alpha\delta^\beta \widetilde v^{\alpha, \beta}$, one will be able to claim the sum is negative and of order $\MCO(\eps^{2\alpha} + \delta^{2\beta})$. Consequently, we can claim $\pz$ outperforms at the order $\eps^{2\alpha} + \delta^{2\beta}$, and there is no need to analyze further terms (e.g. $\eps^{3\alpha}\widetilde v^{3\alpha,0\beta} +\sqrt{\eps} \vozt +  \sqrt{\delta} \vzot$). As you can see, for just one case of $\alpha, \beta$, it is already quite tricky to do the comparison.
	As a result, in this section we shall present the optimality of $\pz$ by another approach: the epsilon-martingale decomposition method. One advantage of the epsilon-martingale decomposition method is the relaxation of the feedback form controls. As you have seen in the aforementioned assumptions on $\pitilde$, we do not require $\pitilde$ to be an explicit function of the states $(t, X_t, Y_t, Z_t)$, but rather a general adapted process, although we intend to compare it with $\pz$, which is of the Markovian type.

	\subsection{The Epsilon-Martingale Decomposition}\label{sec_EMD2_intro}
	The epsilon-martingale decomposition is an efficient tool to find approximations of martingales of interest in non-Markovian problems when small parameters are involved. Denote this martingale by $(V_t^\delta)_{t\in[0,T]}$ with respect to some filtration $(\MCF_t)_{t \in [0,T]}$, where $\delta$ represents the group of small parameters, then the method consists of making an ansatz $Q_t^\delta$ for $V_t^\delta$ in the form of a martingale plus something small (nonzero non-martingale part) with the right terminal condition. Then this ansatz is indeed the approximation to $V_t^\delta$ with an error that is of the order of the non-martingale part. More precisely, suppose one intends to find the approximation of $V_t^\delta$ at order $\sqrt\delta$ 
	then it requires to decompose $Q_t^\delta$ as:
	\begin{equation}\label{p3_eq_epsmart}
	Q_t^\delta = M_t^\delta + R_t^\delta, \quad \text{and} \quad Q_T^\delta = V_T^\delta,
	\end{equation}
	where $M_t^\delta$ is a martingale and $R_t^\delta$ is of order $o(\sqrt\delta)$. Note that the term of order $\sqrt\delta$ will be absorbed in the martingale part $M_t^\delta$. Suppose we obtain such a decomposition \eqref{p3_eq_epsmart}, then, taking conditional expectation with respect to $\MCF_t$ on both sides of the equation $Q_T^\delta = M_T^\delta + R_T^\delta$ gives
	\begin{equation}
	V_t^\delta = \EE\left[Q_T^{\delta} \vert \MCF_t\right] =
	M_t^\delta + \EE\left[R_T^\delta \vert \MCF_t \right] = Q_t^{\delta}  + \EE\left[R_T^\delta \vert \MCF_t \right] - R_t^\delta.
	\end{equation}
	Since $R_t^\delta$ is of order $o(\sqrt\delta)$, $Q_t^{\delta}$ is the approximation to $V_t^\delta$  up to $\sqrt\delta$. Therefore the above argument leads to the desired approximation result.
	
	In our case, the martingale considered is $\Vft_t$ defined in \eqref{p6_def_Vfpi}, and the desired order of $R^{\eps, \delta}$ is $o(\sqrt{\eps} + \sqrt{\delta})$ or $o(1)$ depending on the relation between $\pz$ and $\pzt$. The derivation will be presented in the next section.

\subsection{Asymptotics of $\Vft$ and Proof of Theorem~\ref{p6_thm_main}}

		
In what follows, we shall derive the approximation of $\Vft$. To shorten the length of expressions, we systematically omit the arguments $(t,X_t^\pi, Z_t)$ of the functions $\vz$, $\voz$ and $\vzo$ when no confused is introduced. Also, the claims on the true martingality and on the order of residual terms are guaranteed by Assumption~\ref{p6_appendix_addasump}. Since this is used through the derivation, we mentioned it at the beginning, and will not repeat this reasoning later on.

The order of approximation will depend on $\pzt$ being identical to $\pz$ or not. We first deal with the case $\pzt = \pz$. By the It\^{o}'s formula applied to $\vz(t,X_t^\pi, Z_t)$, we deduce
		\begin{align}
		\ud \vz(t, X_t^\pi, Z_t) &= \Ltx(\lambda(Y_t, Z_t))\vz \ud t + \left(\eps^\alpha \pozt_t + \delta^\beta \pzot_t\right) \mu(Y_t, Z_t) \vz_x \ud t \\
		& \quad +  \half \left(\eps^\alpha \pozt_t + \delta^\beta \pzot_t\right)^2 \sigma^2(Y_t, Z_t) \vz_{xx} \ud t \\
		& \quad + \lambda(Y_t, Z_t) R(t,X_t;\aves(Z_t))\left(\eps^\alpha \pozt_t + \delta^\beta \pzot_t\right) \sigma(Y_t, Z_t) \vz_{xx} \ud t  \\
		& \quad + \delta \MCM_2 \vz \ud t + \sqrt{\delta}\rho_2 g(Z_t)\pi_t \sigma(Y_t, Z_t) \vz_{xz} \ud t \\
		& \quad  + \ud \widetilde M_t^{(1)},
		\end{align}
		where $ \widetilde M_t^{(1)}$ is a martingale
		\begin{equation}
		\ud \widetilde M_t^{(1)} = \pi_t \sigma(Y_t, Z_t)\vz_x \ud W_t + \sqrt{\delta} g(Z_t) \vz_z\ud W_t^Z.
		\end{equation}
		Using the relations $\Ltx(\aves(z)))\vz = 0$, and the definitions of $R(t,x;\lambda)$ and $D_k$, one can simplify the above as
		\begin{align}
		\ud \vz(t, X_t^\pi, Z_t) &= \half \left(\lambda^2(Y_t, Z_t) - \aves^2(Z_t)\right) D_1 \vz \ud t + \sqrt{\delta}\rho_2g(Z_t)\lambda(Y_t, Z_t)D_1 \vz_{z} \ud t \\
		& \quad + \ud \widetilde N_t  + \ud \widetilde R^{(1)}_t + \ud \widetilde M_t^{(1)}, \label{p6_eq_dvz}
		\end{align}
		where $\widetilde N_t \sim \MCO(\eps^{2\alpha} + \delta^{2\beta})$ is strictly decreasing and $\widetilde R^{(1)}$ is higher than order $\sqrt{\eps} + \sqrt{\delta}$ defined by
		\begin{align}
		\ud \widetilde N_t &= \half \left(\eps^\alpha \pozt_t + \delta^\beta \pzot_t\right)^2 \sigma^2(Y_t, Z_t) \vz_{xx} \ud t,\\
		\ud \widetilde R_t^{(1)} &= \delta \MCM_2 \vz \ud t + \sqrt\delta \rho_2 g(Z_t)\left(\eps^\alpha\pozt_t + \delta^\beta\pzot_t\right) \sigma(Y_t, Z_t) \vz_{xz} \ud t.
		\end{align}
		
		Now it remains to find the epsilon-martingale decomposition for the first two terms in \eqref{p6_eq_dvz}. To this end, we first analyze the term $\left(\lambda^2(Y_t, Z_t) - \aves^2(Z_t)\right) \ud t$, which will be repeated used in the following derivation. Recall the solution $\theta(y,z)$ of the Poisson equation defined in Section~\ref{p2_sec_multiheuristic}: $\MCL_0 \theta(y,z) = \lambda^2(y,z) - \aves^2(z)$. Applying the It\^{o}'s formula to $\theta(Y_t, Z_t)$ and omit the arguments $(Y_t, Z_t)$ of $\theta$ for the sake of length, one deduces
		\begin{align}
		\ud \theta(Y_t, Z_t) &= \left[\frac{1}{\eps}\MCL_0\theta+ \delta \MCM_2 \theta + \sqrt{\frac{\delta}{\eps}}\MCM_3 \theta\right] \ud t  + \frac{1}{\sqrt{\eps}} a(Y_t) \partial_y\theta \ud W_t^Y + \sqrt{\delta}g(Z_t)\partial_z\theta \ud W_t^Z,
		\end{align}
		where we recall that $\MCL_0$ and $\MCM_2$ are infinitesimal generators of $Y^{(1)} \stackrel{\MCD}{=} Y_{t\eps}$ and $Z^{(1)}  \stackrel{\MCD}{=} Z_{t/\delta}$ and $\MCM_3 = \rho_{12}a(y)g(z)\partial_{y}\partial_z$. Therefore,
		\begin{align}
		\left(\lambda^2(Y_t, Z_t) - \aves^2(Z_t)\right)\ud t &= \eps \ud \theta - \left[\eps\delta \MCM_2 \theta + \sqrt{\eps\delta}\MCM_3\theta\right] \ud t - \sqrt{\eps}a(Y_t) \partial_y\theta \ud W_t^Y\\
		&\quad  - \eps\sqrt{\delta}g(Z_t)\partial_z\theta \ud W_t^Z.
		\end{align}
		Then the first term in \eqref{p6_eq_dvz} is computed as follows
		\begin{align}\label{p6_eq_dlambdad1vz}
		\half \left(\lambda^2(Y_t, Z_t) - \aves^2(Z_t)\right) D_1 \vz \ud t = \frac{\eps}{2} D_1\vz \ud \theta  - \half\ud \widetilde R_t^{(2)} - \half\ud \widetilde M_t^{(2)},
		\end{align}
		where $\widetilde R_t^{(2)}$ is again of order $o(\sqrt{\eps} + \sqrt{\delta})$ and $\widetilde M_t^{(2)}$ is a true martingale defined by
		\begin{align}
		\ud \widetilde R_t^{(2)} & = \left[\eps\delta \MCM_2 \theta + \sqrt{\eps\delta}\MCM_3\theta\right] D_1 \vz \ud t, \\
		\ud \widetilde M_t^{(2)}& =  \sqrt{\eps}a(Y_t) \theta_y D_1\vz \ud W_t^Y + \eps\sqrt{\delta}g(Z_t)\theta_z D_1 \vz \ud W_t^Z.
		\end{align}
		For the term $D_1\vz \ud \theta$, we use the product rule $\ud \left(D_1 \vz \theta\right) = D_1\vz \ud \theta + \theta \ud D_1 \vz + \ud \average{D_1\vz, \theta}_t$ and obtain
		\begin{align}\label{p6_eq_d1vzdtheta}
		\frac{\eps}{2} D_1\vz \ud \theta = -\frac{\sqrt{\eps}}{2} \rho_1 B(Z_t) D_1^2 \vz \ud t + \half\ud \widetilde R_t^{(3)} + \half \ud \widetilde M_t^{(3)},
		\end{align}
		with
		\begin{align}
		\ud \widetilde R_t^{(3)} & = \eps \ud \left(D_1\vz \theta\right) - \left(\sqrt\eps\rho_1\theta_y a(Y_t) + \eps\sqrt{\delta}\rho_2 \theta_z g(Z_t)\right)\left(\eps^\alpha\pozt_t + \delta^\beta\pzot_t\right)\\
		& \quad \times \sigma(Y_t, Z_t)\partial_xD_1\vz \ud t -\left(\sqrt{\eps\delta}\rho_{12}\theta_y a(Y_t) + \delta \theta_z g(Z_t)\right)\partial_z D_1\vz g(Z_t) \ud t  \\
		& \quad  - \eps \theta \left[\partial_t + \pi_t\mu(Y_t,Z_t)\partial_x + \half \pi_t^2\sigma^2(Y_t,Z_t)\partial_{x}^2 + \delta M_2  +\sqrt{\delta}\rho_2 g(Z_t)\pi_t\sigma(Y_t,Z_t)\partial_{xz}\right]D_1\vz \ud t  \\
		& \quad - \sqrt{\eps}\rho_1 \left(a(Y_t)\lambda(Y_t,Z_t) \theta_y-B(Z_t)\right)D_1^2 \vz \ud t, \\
		\ud \widetilde M_t^{(3)} & = \eps \pi_t \sigma(Y_t, Z_t) \partial_x D_1\vz \ud W_t + \eps\sqrt{\delta}g(Z_t)\partial_z D_1\vz \ud W_t^Z.
		\end{align}
		Now recall that the first order correction in the fast variable $\voz$ defined in \eqref{p2_eq_vozf} satisfies $\Ltx(\aves(z))\voz = \half \rho_1 B(z)D_1^2\vz$, therefore
		\begin{align}\label{p6_eq_dvoz}
		\ud \sqrt{\eps}\voz(t, X_t^\pi, Z_t) =  \frac{\sqrt{\eps}}{2} \rho_1 B(Z_t)D_1^2\vz \ud t  + \ud \widetilde R_t^{(4)} + \ud \widetilde M_t^{(4)},
		\end{align}
		where
		\begin{align}
		\ud \widetilde R_t^{(4)} &= \sqrt{\eps}\left(\eps^\alpha\pozt_t + \delta^\beta\pzot_t\right)\mu(Y_t, Z_t) \voz_x \ud t + \frac{\sqrt{\eps}}{2} \left(\eps^\alpha\pozt_t + \delta^\beta\pzot_t\right)^2\sigma^2(Y_t, Z_t) \voz_{xx}\ud t \\
		& \quad + \sqrt{\eps}\left(\eps^\alpha\pozt_t + \delta^\beta\pzot_t\right)\lambda(Y_t, Z_t)R(t,X_t^\pi, \aves(Z_t))\sigma(Y_t, Z_t)\voz_{xx}\ud t \\
		& \quad + \sqrt{\eps} \delta \MCM_2 \voz \ud t + \sqrt{\eps\delta} \rho_2 g(Z_t)\pi_t \sigma(Y_t,Z_t)\voz_{xz} \ud t \\
		& \quad + \frac{\sqrt{\eps}}{2}\left(\lambda^2(Y_t, Z_t) - \aves^2(Z_t)\right)(D_2 + 2D_1) \voz \ud t,\\
		\ud \widetilde M_t^{(4)} &= \sqrt{\eps}\pi_t \sigma(Y_t, Z_t) \voz_x \ud W_t + \sqrt{\eps\delta}g(Z_t)\voz_z \ud W_t^Z.
		\end{align}
		
		The second term in \eqref{p6_eq_dvz} is taken care of by the first order correction in the slow variable $\vzo$, which satisfies $\Ltx(\aves(z))\vzo = -\rho_2 \widehat\lambda(z)g(z)D_1\vz_z$; see \eqref{p2_eq_vzof}. Applying the It\^{o}'s formula to $\vzo(t, X_t^\pi, Z_t)$ yields
		\begin{align}\label{p6_eq_dvzo}
		\ud \sqrt{\delta}\vzo(t, X_t^\pi, Z_t) = \sqrt\delta\rho_2 g(Z_t)\lambda(Y_t, Z_t) D_1\vz_z \ud t + \ud \widetilde R_t^{(5)} + \ud \widetilde M_t^{(5)},
		\end{align}
		with
		\begin{align}
		\ud \widetilde R_t^{(5)} &= \sqrt{\delta}\left(\eps^\alpha\pozt_t + \delta^\beta\pzot_t\right)\mu(Y_t, Z_t) \vzo_x \ud t + \frac{\sqrt{\delta}}{2} \left(\eps^\alpha\pozt_t + \delta^\beta\pzot_t\right)^2\sigma^2(Y_t, Z_t) \vzo_{xx}\ud t \\
		& \quad + \sqrt{\delta}\left(\eps^\alpha\pozt_t + \delta^\beta\pzot_t\right)\lambda(Y_t, Z_t)R(t,X_t^\pi, \aves(Z_t))\sigma(Y_t, Z_t)\vzo_{xx}\ud t \\
		& \quad +  \delta^{3/2} \MCM_2 \vzo \ud t + \delta \rho_2 g(Z_t)\pi_t \sigma(Y_t,Z_t)\vzo_{xz} \ud t \\
		& \quad + \frac{\sqrt{\delta}}{2}\left(\lambda^2(Y_t, Z_t) - \aves^2(Z_t)\right)(D_2 + 2D_1) \vzo \ud t\\
		& \quad - \sqrt{\delta}\rho_2 g(Z_t)\left(\lambda(Y_t, Z_t) - \widehat \lambda(Z_t)\right)D_1\vz_z \ud t,\\
		\ud \widetilde M_t^{(5)} &= \sqrt{\delta}\pi_t \sigma(Y_t, Z_t) \vzo_x \ud W_t + \delta g(Z_t)\vzo_z \ud W_t^Z.
		\end{align}
	  	Now, define the function $Q(t,x,z)$ by
		\begin{equation}
		Q(t,x,z) = \vz(t,x,z) + \sqrt{\eps}\voz(t,x,z) + \sqrt{\delta}\vzo(t,x,z),
		\end{equation}
		whose terminal condition is $Q(T,x,z) = \vz(T,x,z) = U(x)$.
		Combing equation \eqref{p6_eq_dvz}, \eqref{p6_eq_dlambdad1vz}, \eqref{p6_eq_d1vzdtheta}, \eqref{p6_eq_dvoz} and \eqref{p6_eq_dvzo}, we deduce
		\begin{equation}
		\ud Q(t,X_t^\pi, Z_t) = \ud \widetilde R_t + \ud \widetilde M_t + \ud \widetilde N_t,
		\end{equation}
		where $\widetilde R_t$ is of order $o(\sqrt{\eps} + \sqrt{\delta})$, and $\widetilde M_t$ is a true martingale given by
		\begin{align}
		\ud \widetilde R_t &= \ud  \widetilde R_t^{(1)} -\half\ud  \widetilde R_t^{(2)} + \half\ud  \widetilde R_t^{(3)} +\ud  \widetilde R_t^{(4)} +\ud  \widetilde R_t^{(5)},\\
		\ud \widetilde M_t& = \ud \widetilde M_t^{(1)} -\half\ud \widetilde M_t^{(2)}+\half\ud \widetilde M_t^{(3)}+\ud \widetilde M_t^{(4)}+\ud \widetilde M_t^{(5)}.
		\end{align}
		The above claims on the order of $\widetilde R_t$ and on the true martingality of $\widetilde M_t$ are justified by integrability conditions required in Assumption~\ref{p6_assump_optimality}, estimates of $\vz$ listed in \cite[Proposition~3.7]{FoHu:16}, and growth conditions of various functions.
		Finally we conclude
		\begin{align}
		\Vft_t&= \EE[U(X_T^\pi)\vert \MCF_t] = \EE[Q(T, X_T^\pi, Z_T) \vert \MCF_t] \\
		&= Q(t, X_t, Z_t) + \EE[\widetilde R_T - \widetilde R_t \vert \MCF_t] + \EE[\widetilde N_T - \widetilde N_t \vert \MCF_t] \\
		& < Q(t, X_t, Z_t) + o(\sqrt{\eps} + \sqrt{\delta}),\label{p6_eq_Vfteq}
		\end{align}
		where the last step is by the monotonicity of $\widetilde N_t$.
		
		In the case that $\pzt \ne  \pz$, similar derivation brings
		\begin{align}
		\ud \vz(t, X_t^\pi, Z_t) &= \Ltx(\lambda(Y_t, Z_t))\vz \ud t + \half \left(\pi_t - \pz\right)^2 \sigma^2(Y_t, Z_t)\vz_{xx} \ud t + \delta \MCM_2 \vz \ud t \\
		& \quad + \sqrt{\delta}\rho_2g(Z_t)\pi_t \sigma(Y_t, Z_t) \vz_{xz} \ud t  + \pi_t\sigma(Y_t, Z_t) \vz_x \ud W_t + \sqrt{\delta} g(Z_t)\vz_z \ud W_t^Z \\
		& = \ud \widehat R_t + \ud \widehat N_t + \ud \widehat M_t,
		\end{align}
		where $\widehat R_t$ is of order $\MCO(\sqrt{\eps} + \sqrt{\delta})$, $\widehat M_t$ is a true martingale and $\widehat N_t$ is strictly decreasing and of order one:
		\begin{align}
		\ud \widehat R_t &= \half(\lambda^2(Y_t, Z_t) - \aves^2(Z_t)) D_1 \vz \ud t +\delta \MCM_2 \vz \ud t + \sqrt{\delta}\rho_2g(Z_t)\pi_t \sigma(Y_t, Z_t) \vz_{xz} \ud t, \\
		\ud \widehat N_t & = \half \left(\pi_t - \pz\right)^2 \sigma^2(Y_t, Z_t)\vz_{xx} \ud t, \\
		\ud \widehat M_t & = \pi_t\sigma(Y_t, Z_t) \vz_x \ud W_t + \sqrt{\delta} g(Z_t)\vz_z \ud W_t^Z.
		\end{align}
		Therefore in this case,
		\begin{align}
		\Vft_t&= \EE[U(X_T^\pi) \vert \MCF_t] = \EE[\vz(T, X_T^\pi, Z_T) \vert \MCF_t] \\
		&= \vz(t, X_t, Z_t) + \EE[\widehat R_T - \widehat R_t \vert \MCF_t] + \EE[\widehat N_T - \widehat N_t \vert \MCF_t] \\
		& < \vz(t, X_t, Z_t) + \MCO(\sqrt{\eps} + \sqrt{\delta}). \label{p6_eq_Vftneq}
		\end{align}
		
The inequality in Theorem~\ref{p6_thm_main} is a consequence of the two approximation results of $\Vfl$ and $\Vft$.  By comparing the approximation of $\Vfl$  given in Theorem~\ref{p2_Thm_full} with the definition of $Q(t,x,z)$, we deduce
		\begin{equation}
		\Vfl(t,x,y,z) = Q(t,x,z) + \MCO(\eps + \delta).
		\end{equation}
Now, compare it with the two inequalities \eqref{p6_eq_Vfteq} and \eqref{p6_eq_Vftneq}, and observing that $\frac{\widetilde N_T - \widetilde N_t }{\sqrt{\eps}  + \sqrt{\delta}}$ and $\frac{\widehat N_T - \widehat N_t }{\sqrt{\eps}  + \sqrt{\delta}}$ are negative no matter what values $\alpha$ and $\beta$ take, we have the desired result.

\section{Conclusion}\label{sec_conclusion}

In this paper, we study the portfolio optimization problem in multiscale stochastic environment when the investor's utility is general. Motivated by recent empirical studies \cite{FoPaSiSo:11}, the return and volatility of the underlying asset are modeled by  functions of both fast and slow time scales. We first analyze the performance of a zeroth order strategy proposed in \cite{FoSiZa:13}, and give a rigorous approximation of the value process associated to this strategy, up to the first order. Then we compare its performance to any admissible strategy of a specific form. The comparison is made up to a certain order, thus we call this result asymptotic optimality. The first part is done by applying the singular and regular perturbation techniques to a linear PDE; while the second part, we employ the epsilon-martingale decomposition method, which not only simplifies the derivation, but also extends the analysis to non-Markovian strategies. We comment that our results partially answer the question \eqref{def_object} by giving a suboptimal strategy via analyzing the associated linear PDE, although a full optimality result will require to work with viscosity solutions of the HJB equation. It is also of the authors' interest to extend the analysis to fractional multiscale environment, motivated by the recent studies \cite{roughvol}.

\appendix


\section*{Appendix}

\section{Additional Assumptions in Section~\ref{p6_sec_optimality}}\label{p6_appendix_addasump}
	
	This set of assumptions is used to establish the approximation accuracy \eqref{p6_eq_Vfteq} (\emph{resp.} \eqref{p6_eq_Vftneq}) to $\Vft$ defined in \eqref{p6_def_Vfpi}. To be specific, these assumptions will ensure that $\widetilde M_t$ (\emph{resp.} $\widehat M_t$) is a true martingale and that $\widetilde R_t$ (\emph{resp.} $\widehat R_t$)  is of order $o(\sqrt\eps + \sqrt{\delta})$ (\emph{resp.} $\MCO(\sqrt{\eps} + \sqrt{\delta})$).
	\begin{assump}\label{p6_assump_optimality}
		Let $\pitilde = \pzt+\eps^\alpha\pozt + \delta^\beta\pzot$ be the trading strategy to compare with, and recall that $X^\pi$ is the wealth process generated by this strategy $\pi = \pitilde$ as defined in \eqref{p6_def_Xtpi}. In order to condense the notation,  we systematically omit  the arguments $(s,X_s^\pi,Z_t)$ of $\vz$, $\voz$ and $\vzo$ and the argument $(Y_t, Z_t)$ of $\mu$ and $\sigma$ in what follows. According to the different cases, we further require:
		\begin{enumerate}[(i)]
			\item\label{p6_assump_optimality_eq} If $\pzt \equiv \pz$, the quantities below, for any $t \in [0,T]$,  
			are uniformly bounded in $(\eps, \delta)$:
			
			$\EE\int_0^T \left(\sigma\pozt_s\vz_x\right)^2 \ud s $, $\EE\int_0^T \left(\sigma\pzot_s\vz_x\right)^2 \ud s $, $\EE\int_0^T \left(\sigma^2(\pozt_s)^2\vz_{xx}\right)^2 \ud s $,
			
			$\EE\int_0^T \left(\sigma^2(\pzot_s)^2\vz_{xx}\right)^2 \ud s $,
			$\EE \abs{\int_0^T  \mu\pozt_s \voz_x \ud s}$,	$\EE \abs{\int_0^T  \mu\pzot_s \voz_x \ud s}$,	
			
			$\EE \abs{\int_0^T  \sigma^2\left(\pozt_s\right)^2 \voz_{xx} \ud s}$,	$\EE \abs{\int_0^T  \sigma^2\left(\pzot_s\right)^2 \voz_{xx} \ud s}$,	
			
			$\EE \abs{\int_0^T  \mu\pozt_s R(s, X_s^\pi; \overline{\lambda}(Z_s)) \voz_{xx} \ud s}$,	
			$\EE \abs{\int_0^T  \mu\pzot_s R(s, X_s^\pi; \overline{\lambda}(Z_s)) \voz_{xx} \ud s}$,	
			
			$\EE \int_0^T  \left(\sigma\pozt_s \voz_{x}\right)^2 \ud s$, $\EE \int_0^T  \left(\sigma\pzot_s \voz_{x}\right)^2 \ud s$, $\EE \abs{\int_0^T  \mu\pozt_s \vzo_x \ud s}$,	
			
			$\EE \abs{\int_0^T  \mu\pzot_s \vzo_x \ud s}$,	$\EE \abs{\int_0^T  \sigma^2\left(\pozt_s\right)^2 \vzo_{xx} \ud s}$,	$\EE \abs{\int_0^T  \sigma^2\left(\pzot_s\right)^2 \vzo_{xx} \ud s}$,	
			
			$\EE \abs{\int_0^T  \mu\pozt_s R(s, X_s^\pi; \overline{\lambda}(Z_s)) \vzo_{xx} \ud s}$,	
			$\EE \abs{\int_0^T  \mu\pzot_s R(s, X_s^\pi; \overline{\lambda}(Z_s)) \vzo_{xx} \ud s}$,	
			
			$\EE \int_0^T  \left(\sigma\pozt_s \vzo_{x}\right)^2 \ud s$, $\EE \int_0^T  \left(\sigma\pzot_s \vzo_{x}\right)^2 \ud s$,

			\item\label{p6_assump_optimality_neq} If $\pzt \not\equiv \pz$, we require 	$\EE\int_0^T \left(\sigma\pi_s\vz_x\right)^2 \ud s$ to be uniformly bounded in $\eps$ and $\delta$.
		\end{enumerate}
	\end{assump}

\bibliographystyle{plain}
\bibliography{Reference}

\end{document}